\newtheorem{theorem}{Theorem}
\newtheorem{lemma}[theorem]{Lemma}
\newtheorem{corollary}[theorem]{Corollary}
\newtheorem{definition}[theorem]{Definition}
\newcommand{\Int}{{\mathbb Z}}
\newcommand{\outdeg}{\operatorname{out-deg}}
\newcommand{\WLOG}{without loss of generality \xspace} 
\newcounter{dummycount}
\newcommand{\wormhole}[1]
{
\newcounter{#1}
\setcounter{#1}{\value{theorem}}
}
\newenvironment{backInTime}[1]
{
\setcounter{dummycount}{\value{theorem}}
\setcounter{theorem}{\value{#1}}
}
{
\setcounter{theorem}{\value{dummycount}}
}
\newcommand{\df}{\textit}
\title{Contact Representations of Sparse Planar Graphs}
\author[1]{Md.~Jawaherul~Alam}
\author[2]{David~Eppstein}
\author[3]{Michael~Kaufmann}
\author[1]{Stephen~G.~Kobourov}
\author[1]{Sergey~Pupyrev}
\author[4]{Andr\'e~Schulz}
\author[5]{Torsten~Ueckerdt}
\affil[1]{Department of Computer Science, University of Arizona, USA
}
\affil[2]{Computer Science Department, University of California, Irvine, USA
}
\affil[3]{Wilhelm-Schickard-Institut f\"ur Informatik, Universit\"at T\"ubingen, Germany
}
\affil[4]{Institut Math.~Logik \& Grundlagenforschung, Universit\"at M\"unster, Germany
}
\affil[5]{Department of Mathematics, Karlsruhe Institute of Technology, Germany
}
\begin{document}
\date{}

\maketitle

\begin{abstract}
We study representations of graphs by \textit{contacts of circular arcs}, \textit{CCA-representations}
 for short, where the vertices are interior-disjoint circular arcs in the plane and each edge is realized
 by an endpoint of one arc touching the interior of another. A graph is \textit{$(2, k)$-sparse}
 if every $s$-vertex subgraph has at most $2s-k$ edges, and \textit{$(2,k)$-tight} if
 in addition it has exactly $2n-k$ edges, where $n$ is the number of vertices.
 Every graph with a CCA-representation is planar
 and $(2,0)$-sparse, and it follows from known results on contacts of line segments that for $k\ge 3$ every $(2,k)$-sparse graph has a
 CCA-representation. Hence the question of CCA-representability is open for $(2,k)$-sparse graphs with $0\le k\le 2$. We partially answer this question by computing CCA-representations for several
 subclasses of planar $(2, 0)$-sparse graphs. In particular, we show that every plane $(2, 2)$-sparse
 graph has a CCA-representation, and that any plane $(2,1)$-tight graph or $(2,0)$-tight graph dual to a $(2,3)$-tight graph or $(2,4)$-tight graph has a CCA-representation.

Next, we study CCA-representations in which each arc has an empty convex hull. We characterize the plane graphs that have
such a representation, based on the existence of a special orientation of the graph edges.
Using this characterization, we show that every plane graph of maximum degree 4 has such a representation, but that finding such a representation for a plane $(2,0)$-tight graph
with maximum degree $5$ is an NP-complete problem.

Finally, we describe a simple algorithm for representing plane $(2, 0)$-sparse graphs with \textit{wedges},
 where each vertex is represented with a sequence of two circular arcs (straight-line segments).
\end{abstract}

\newpage


\begin{figure}[b]
\vspace{-2ex}
\centering
\includegraphics[width=0.22\textwidth]{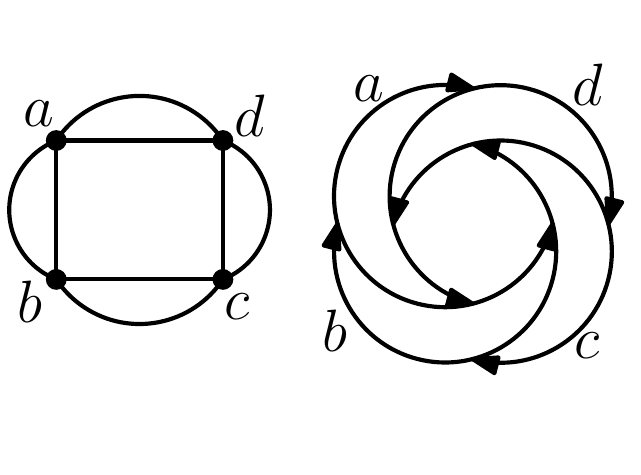}
\hspace{0.1cm}
\includegraphics[width=0.37\textwidth]{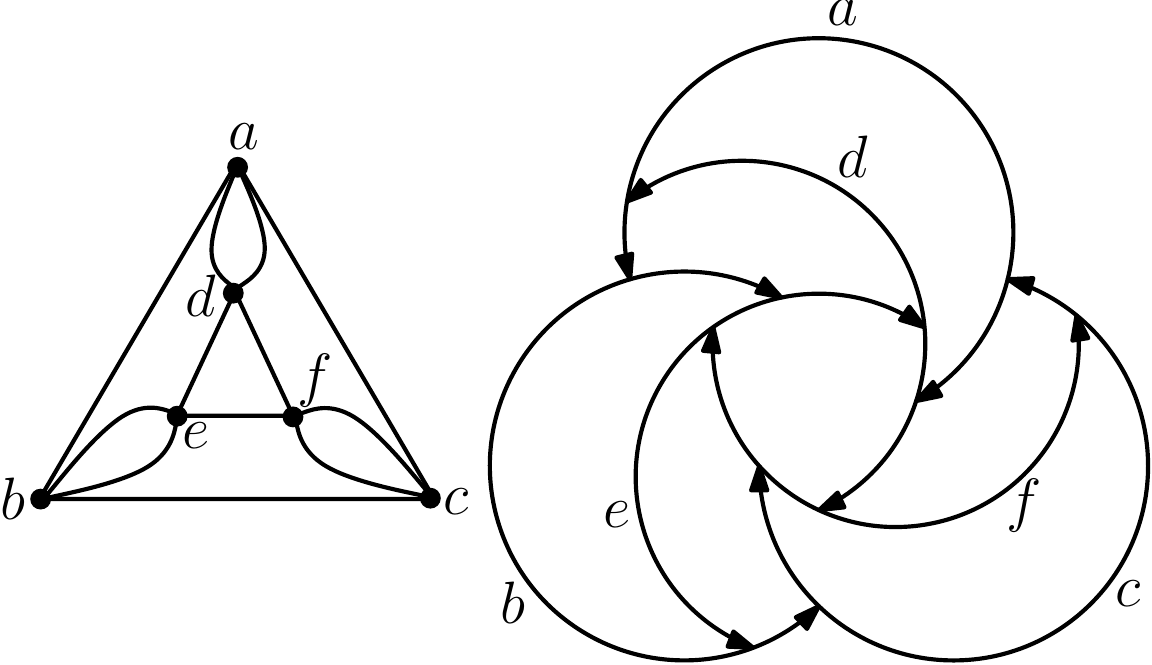}
\hspace{0.1cm}
\includegraphics[width=0.35\textwidth]{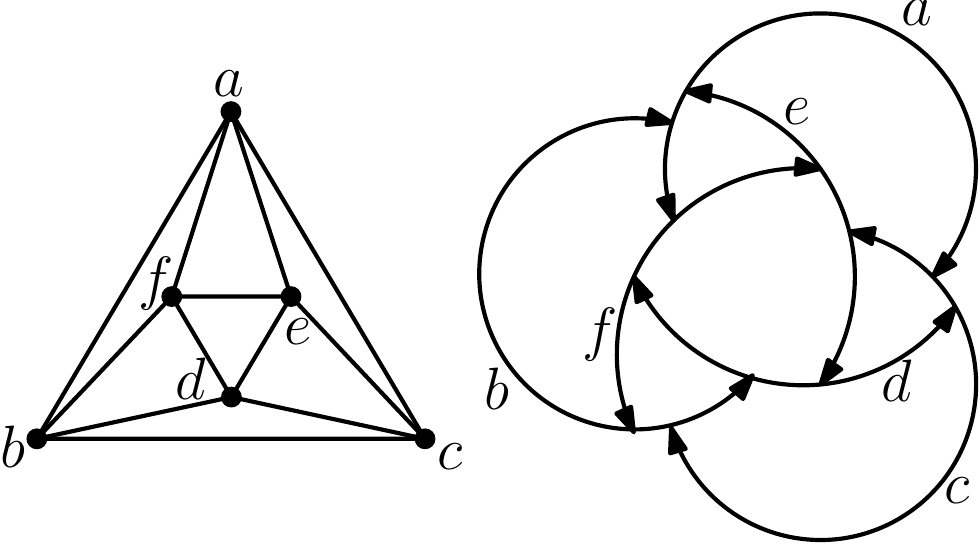}\\
(a)\hspace{0.3\textwidth}(b)\hspace{0.35\textwidth}(c)\hspace{0.1\textwidth}
\caption{Circular-arc contact representations of two multigraphs (a)--(b) and a simple graph (c).}
\label{fig:teaser}
\vspace{1ex}
\end{figure}

\section{Introduction}
In a \df{contact representation} of a planar graph, the vertices are represented by non-overlapping geometric objects such as circles, polygons, or line segments and the edges are realized by a prespecified type of contact between these objects. Contact graphs of circles, made famous by the Koebe--Andreev--Thurston circle packing theorem~\cite{Koebe36}, have many applications in graph drawing~\cite{Malitz1994,BerEpp-WADS-01,Moh-GD-99,KesPacPal-SJDM-13,Rot-GD-11,AicRotSch-CGTA-12,BekRaf-GD-12,Epp-GD-12,EppHolLof-GD-13},
and this success has motivated the study of many other contact representations~\cite{FMR94,homothetic07,GonLevPin-DCG-12,ghkk10}.
The special cases of contact representations with curves and in particular with line segments are of particular interest~\cite{FMP91,Hli98,Hli01,CCD04,FM07-segment}.
We consider a novel type of contact representation where a vertex is represented by a circular arc, and an edge corresponds to one arc touching an interior point of another arc; see Fig.~\ref{fig:teaser}. Note that we do not count tangencies between interior points as contacts as this would trivialize the problem. These representations are a generalization of the contacts of straight-line segments (which may be thought of as circular arcs with infinite radius) and we call them {\em contacts of circular arcs} or {\em CCA-representations} for short.

Every $k$-vertex induced subgraph of a contact graph of curves in the plane has at most $2k$ edges, because every edge uses up one of the curve endpoints. This motivates us to study classes of sparse graphs defined by limits on the numbers of edges in their subgraphs.
A graph $G = (V,E)$ is said to be \df{$(p,k)$-sparse}~\cite{Lee08} if for every $W \subseteq V$ we have
\[
 |E[W]| \leq \max\{p|W| - k, |W|-1\};
\]
it is \df{$(p,k)$-tight} if in addition $|E| = p|V| - k$. For example, $(1,1)$-sparse graphs are exactly forests, while $(1,1)$-tight graphs are trees, and the observation above can be rephrased as stating that all graphs representable by circular arcs are $(2,0)$-sparse.
This definition makes sense only for $k<2p$: for larger $k$, each two vertices would induce no edges and the graph would be empty. However,
we may extend the definition by restricting $|W|$ to be larger than two. Thus, we define a graph to be $(2,4)$-sparse if every $s$-vertex subgraph with $s\ge 3$ has at most $2s-4$ edges, and $(2,4)$-tight if in addition it has exactly $2|V|-4$ edges. A planar graph is $(2,4)$-sparse if and only if it is triangle-free and $(2,4)$-tight if and only if it is a maximal bipartite planar graph.
The same idea can be extended to larger $k$ by restricting $|W|$ to be even greater, but in the remainder we consider only planar $(2,k)$-sparse and planar $(2,k)$-tight graphs for $k \in \{0,1,2,3,4\}$.

A graph admits a curve contact representation if and only if it is planar (because the curves do not cross) and $(2,0)$-sparse (each subset of $s$ curves has at most $2s$ contacts)~\cite{Hli98}.
On the other hand, a planar graph has a contact representation with line segments if and only if it is $(2,3)$-sparse~\cite{Tho93,ourJGAA12}.
Hence, natural questions arise: What are the simplest curves that can represent all planar $(2,0)$-sparse graphs? Perhaps the simplest non-straight curves are circular arcs, so how powerful are circular arcs in terms of contact representations? In particular, does every planar $(2,k)$-sparse graph
have a CCA-representation for $k \in \{0,1,2\}$? We partially answer these questions by computing circular-arc contact representation for several subclasses
of planar $(2,0)$-sparse graphs, and by finding a $(2,0)$-sparse plane multigraph that does not have such a representation.

As another contribution, we resolve the open problem stated by de Fraysseix and Ossona de Mendez~\cite{FM07-segment}. They pointed out that
any contact representation with curves is homeomorphic to a contact representation with polylines composed of three segments, and
asked whether two segments per polyline is sufficient. We affirmatively answer the question.

\medskip
\noindent{\bfseries Our Results.}
We first need some structural results about sparse planar graphs.

\begin{lemma}[Augmentations]
\label{lem:augmentation}
~\\[-3ex]
\begin{itemize}
  \item For every integer $k \in \{0,1,2,3\}$, every plane $(2,k)$-sparse graph is a
 spanning subgraph of some plane $(2,k)$-tight graph.

  \item For $k = 4$, a $(2,k)$-sparse graph forms a subgraph of a $(2,k)$-tight graph if and only
	if it is bipartite. In particular, the $5$-cycle is a $(2,4)$-sparse graph that is not a subgraph
	of a $(2,4)$-tight graph.
\end{itemize}
\end{lemma}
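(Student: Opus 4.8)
I would prove the two items separately, in both cases leaning on the description of plane $(2,k)$-tight graphs.

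\emph{First item.} For $k\in\{0,1,2,3\}$ the $(2,k)$-sparse subgraphs of the complete (multi)graph on a fixed vertex set are precisely the independent sets of a matroid $\mathcal{M}_{2,k}$~\cite{Lee08}, whose bases are the $(2,k)$-tight graphs. Hence it suffices to prove the one-step claim: \emph{if a plane $(2,k)$-sparse graph $G$ is not $(2,k)$-tight, then some non-edge, or some parallel edge, can be drawn inside a face of $G$ joining two vertices of that face so that the result is still $(2,k)$-sparse}; iterating and stopping when $|E|=2n-k$ then yields the asserted spanning plane $(2,k)$-tight supergraph. First I would dispose of connectivity: a disconnected $(2,k)$-sparse graph that is not tight admits such a move --- add an edge inside a non-tight component, or, when $k\ge1$, an edge joining two components that share a face, where a one-line sparsity count shows safety --- so we may assume $G$ connected.

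Now suppose, for contradiction, that no such move exists. Then for every face $f$ and all vertices $u,v$ on $f$, the edge $uv$ (a parallel copy if $uv$ already bounds $f$) lies in the closure of $E(G)$ in $\mathcal{M}_{2,k}$. Since a matroid closure absorbs its own elements, adjoining \emph{all} these ``face chords'' leaves the rank unchanged, so the resulting graph $H$ still has rank $|E(G)|<2n-k$ in $\mathcal{M}_{2,k}$. But $H$ contains a planar triangulation $T$ on the same vertex set (triangulate each face of $G$, using only chords of faces), and a planar triangulation has rank exactly $2n-k$ in $\mathcal{M}_{2,k}$: for $k=3$ this is the classical generic rigidity of planar triangulations, and for $k\le2$ it is the analogous statement that a planar triangulation contains the required number of edge-disjoint spanning trees or pseudoforests of the right sizes (a handful of small cases, where a plane $(2,k)$-tight graph is forced to be a multigraph, are checked directly). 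This gives $\mathrm{rank}_{\mathcal{M}_{2,k}}(H)\ge 2n-k$, a contradiction. For $k\le2$ there is a self-contained variant that avoids triangulations: non-addability of $uv$ forces a tight set containing $\{u,v\}$ for \emph{every} pair of vertices on a face, adjacent pairs included, because for $k\le 2$ a second parallel copy is only forbidden by a tight set; and for $k\le2$ any two tight sets sharing a vertex have a tight union, so the tight sets covering the vertex sets of all faces merge, along a spanning tree of the (connected) dual of $G$, into a single tight set equal to $V$, contradicting $|E(G)|<2n-k$. I expect $k=3$ to be the main obstacle: parallel edges are outright forbidden and triangular faces carry no chord, so the elementary merging argument breaks and one really must invoke rigidity of triangulations.

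\emph{Second item.} Here a plane $(2,4)$-tight graph is exactly a maximal bipartite plane graph, i.e.\ a plane quadrangulation once $n\ge4$. One direction is immediate: a subgraph of a bipartite graph is bipartite, so a subgraph of a $(2,4)$-tight graph is bipartite; in particular the $5$-cycle, which is triangle-free and readily checked to be $(2,4)$-sparse yet is an odd cycle, is a subgraph of no $(2,4)$-tight graph. For the converse, let $G$ be a bipartite plane $(2,4)$-sparse graph. I would augment $G$, within its embedding, to a maximal bipartite plane graph: first connect the components and, for $n\ge4$, make the graph $2$-connected by inserting edges in faces between non-adjacent vertices of opposite colour, then repeatedly bisect any face of length at least $6$ by a non-crossing chord joining two non-adjacent vertices of opposite colour. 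The result is a plane quadrangulation on the same vertex set, hence a spanning $(2,4)$-tight supergraph of $G$ (and every intermediate graph is $(2,4)$-sparse for free, being a subgraph of the final quadrangulation). The delicate point --- the step I would treat most carefully --- is that such a chord always exists, i.e.\ that one is never forced into a chord that would create a parallel or a monochromatic edge; this is exactly where planarity enters (if every admissible chord of a face were already present, that face together with those chords would contain a $K_{3,3}$-subdivision), combined with a suitable choice of apex vertex.
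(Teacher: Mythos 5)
Your proposal is correct in outline, and for the first item it takes a genuinely different route from the paper. The paper handles $k=0,1,2,3$ by four separate ad hoc arguments: a maximality argument with inclusion-maximal tight subgraphs for $k=0$, the Lee--Streinu pebble game for $k=1$, a two-spanning-forest decomposition with a face-walking step for $k=2$, and a flexibility/degree-of-freedom argument for $k=3$. You replace all of these by one matroid-closure argument: if no face chord (or parallel copy) is addable, every such chord lies in $\mathrm{cl}_{\mathcal{M}_{2,k}}(E(G))$, so adjoining all of them does not raise the rank; yet the adjoined set contains a spanning planar triangulation, whose $\mathcal{M}_{2,k}$-rank is $2n-k$ (generic rigidity of triangulations for $k=3$; for $k\le 2$ this follows by adding $3-k$ spare edges of the triangulation to a spanning Laman subgraph, since a Laman graph plus $j$ further edges is $(2,3-j)$-tight). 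This is a clean unification, bought at the price of importing the count-matroid machinery and rigidity of triangulations, neither of which the paper needs in this form; your tight-set-merging variant for $k\le 2$ is also sound (submodularity gives tight unions of tight sets meeting in at least one vertex when $k\le 2$) and is closer in spirit to the paper's elementary counting. The details you defer are real but routine: the connectivity reduction for $k=0$ (where an edge between two tight components is \emph{not} safe, so one must work inside a non-tight component), faces of multigraphs with fewer than two distinct vertices, and the standard but fiddly fact that a connected simple plane graph on at least three vertices can be completed to a \emph{simple} maximal planar graph using only face chords within the given embedding --- note that this last fact must be invoked for $G\cup\{\text{all chords}\}$ directly, since inside your contradiction you cannot first ``make the graph $2$-connected'' by adding edges. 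For the second item your argument is essentially the paper's: three mutually crossing properly colored diagonals of a face of length at least six cannot all be present, as they would yield a $K_{3,3}$ subdivision.
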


\begin{lemma}[Plane Duals]
\label{lem:duals}
~\\[-3ex]
 \begin{itemize}
  \item For every $k \in \{0,1\}$ and every integer $\ell \in \Int$, there is a plane $(2,k)$-tight graph
	whose dual is not $(2,\ell)$-sparse.

  \item For every $k \in \{2,3,4\}$, the dual of every plane $(2,k)$-tight graph is $(2,4-k)$-tight.
 \end{itemize}
\end{lemma}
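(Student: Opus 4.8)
The statement has two unrelated parts — a construction (first bullet) and an exact count (second bullet) — and I would attack them separately. For the first bullet, the plan is to produce, for every $k\in\{0,1\}$ and every $\ell\in\Int$, a plane $(2,k)$-tight multigraph whose dual has one vertex carrying $p$ loops, with $p$ as large as we please. I would take a ``core'' consisting of two vertices $u,v$ joined by $4-k$ parallel edges (this core is itself $(2,k)$-tight), fix one of its faces $F$, and inside $F$ attach to $u$, for $i=1,\dots,p$, a pendant edge $uy_i$ together with a triple edge between $y_i$ and a new vertex $z_i$. The resulting graph has $2p+2$ vertices and $4p+4-k$ edges, which is the $(2,k)$-tight count $2(2p+2)-k$; sparsity follows by splitting a candidate set $W$ according to which of $u$, $v$, and the petals it meets. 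Each edge $uy_i$ is a bridge incident only to $F$ (on both sides), so in the dual $F$ carries exactly $p$ loops, giving $|E^{*}[\{F\}]|=p$. Since $(2,\ell)$-sparsity would force $p\le\max\{2-\ell,0\}$, taking $p\ge\max\{3,\,3-\ell\}$ makes the dual fail to be $(2,\ell)$-sparse.

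For the second bullet the heart is a bookkeeping identity. Let $G$ be plane $(2,k)$-tight with $k\in\{2,3,4\}$; such a $G$ is connected (a disconnected $(2,k)$-sparse graph with $k\ge 2$ has too few edges to be tight). With $n=|V(G)|$, $m=2n-k$ edges and $f$ faces, Euler's formula gives $f=m-n+2$, so $G^{*}$ has $f$ vertices and $m$ edges, and a one-line computation shows $m=2f-(4-k)$, which is already the edge count of a $(2,4-k)$-tight graph; it remains to check that $G^{*}$ is $(2,4-k)$-sparse. For a nonempty proper vertex set $W^{*}$ of $G^{*}$ let $\mathcal G$ be the complementary set of faces of $G$, and let $E(\mathcal G)$ be the set of edges of $G$ incident to at least one face of $\mathcal G$, so that $E^{*}[W^{*}]$ and $E(\mathcal G)$ partition $E(G)$ and $|E^{*}[W^{*}]|=m-|E(\mathcal G)|$. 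Using $m-2f=k-4$, the inequality $|E^{*}[W^{*}]|\le 2|W^{*}|-(4-k)$ becomes \emph{equivalent} to
\[ |E(\mathcal G)|\ \ge\ 2|\mathcal G| \qquad\text{for every proper set $\mathcal G$ of faces of $G$} \]
(the term $|W^{*}|-1$ in the sparsity bound never matters, since $4-k\le 2$). So the whole statement reduces to this face–edge inequality.

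To prove the inequality, let $B$ be the plane subgraph of $G$ whose edge set is exactly $E(\mathcal G)$. Since $G$ is connected, every face of $G$ is an open disc, so the boundary of any face $F\in\mathcal G$ lies entirely in one component of $B$ and $F$ survives intact as a face of that component; hence the $|\mathcal G|$ faces of $\mathcal G$ are distinct faces of $B$, and because $\mathcal G$ is proper and $G$ connected there is at least one further face of $B$, so $B$ has at least $|\mathcal G|+1$ faces. Now fix a component $B_i$ of $B$, with $n_i\ge 2$ vertices and $m_i$ edges. On the one hand $(2,k)$-sparsity of $G$ (with $k\ge 2$, checking the few smallest cases and using triangle-freeness when $k=4$) yields $m_i\le 2n_i-2$. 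On the other hand $B_i$ cannot have \emph{all} of its faces in $\mathcal G$: if it did, those faces would tile the plane by faces of $G$, forcing $B_i=G$ and $\mathcal G$ to be the set of all faces, contradicting properness. Hence $B_i$ contains at most $f(B_i)-1=m_i-n_i+1\le n_i-1$ faces of $\mathcal G$, and summing over components gives $|\mathcal G|\le\sum_i(n_i-1)=|V(B)|-c(B)$, where $c(B)$ is the number of components. Finally Euler's formula for the (possibly disconnected) plane graph $B$ gives $|E(B)|=|V(B)|-c(B)+\bigl(f(B)-1\bigr)\ge\bigl(|V(B)|-c(B)\bigr)+|\mathcal G|\ge 2|\mathcal G|$, i.e. $|E(\mathcal G)|\ge 2|\mathcal G|$. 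Together with the exact edge count this shows $G^{*}$ is $(2,4-k)$-tight.

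I expect the last paragraph to be the delicate part: the dichotomy that each component of $B$ either omits a face of $\mathcal G$ or equals $G$, and the verification of $m_i\le 2n_i-2$ uniformly across $k\in\{2,3,4\}$ (two-vertex components, and the multigraph case $k=2$ versus the simple cases $k=3,4$), both need some care. The reduction to the face–edge inequality and the Euler bookkeeping around it are then routine.
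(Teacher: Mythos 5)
Your proposal is correct, but it reaches both bullets by genuinely different routes than the paper. For the first bullet, the paper takes a cycle of length $2(5-\ell)$ and replaces alternate edges by copies of a $(2,k)$-tight gadget (with an edge deleted where needed), so that the dual acquires $5-\ell$ parallel edges between two face-vertices; you instead exploit bridges, whose dual edges are loops, to pile $p$ loops onto a single dual vertex. Both constructions kill $(2,\ell)$-sparsity, and your petal gadgets do satisfy the tightness count and sparsity check. For the second bullet, the paper argues by cases: for $k=4$ the dual of a quadrangulation is $4$-regular, hence $(2,0)$-tight; for $k=2$ and $k=3$ it invokes the Nash-Williams/Tutte/Tay spanning-tree characterization of $(2,2)$-tightness together with the fact that complements of spanning trees dualize to spanning trees (plus the orientation criterion of Lemma~\ref{lem:orientations}(iii) for $k=3$). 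You instead give one uniform counting argument: reduce dual sparsity to the inequality $|E(\mathcal{G})|\ge 2|\mathcal{G}|$ for proper face sets $\mathcal{G}$, and prove it by Euler's formula on the subgraph $B$ spanned by $E(\mathcal{G})$ together with the primal bound $m_i\le 2n_i-2$ on each component. This buys you an argument that is elementary (no tree-decomposition theorems) and handles all three values of $k$ at once, at the cost of the topological bookkeeping you flag --- which does check out: each face of $\mathcal{G}$ has all its boundary edges in $B$ and hence survives as a face of one component of $B$, and a component all of whose faces lie in $\mathcal{G}$ would force $B=G$ and $\mathcal{G}$ improper. One small inaccuracy: faces of a connected $(2,k)$-tight graph need not be open discs (cut vertices occur, e.g.\ two copies of $K_4$ sharing a vertex is $(2,2)$-tight), but your argument only needs that face boundaries are connected and that a face whose boundary edges all lie in $B$ remains a face of $B$, both of which hold for any connected plane graph.
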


Lemma~\ref{lem:duals} implies that for $k=2$, duality is an involution on the plane $(2,4-k)$-tight graphs, and therefore, every plane $(2,4-k)$-tight graph can be obtained as the dual of a plane $(2,k)$-tight graph. However, for $k\in\{3,4\}$, the duals of plane $(2,k)$-tight graphs form a proper subclass of all plane $(2,4-k)$-tight graphs. In fact, we prove that the dual of every plane $(2,3)$-tight graph is a co-Laman graph
(a graph in which $|E| = 2|V|-1$ and $E[W] \leq 2|W| - 2$ for all $W \subsetneq V$; see Fekete~et al.~\cite{FJW04}).
And one can easily prove that the dual of every $(2,4)$-tight graph is
 $4$-regular.

Our main results are:

\begin{theorem}[Contact Circular Arc Representations]\label{thm:CCA}
~\\[-3ex]
 \begin{itemize}
  \item Every plane $(2,2)$-sparse graph has a contact representation with circular arcs.
  \item Every plane co-Laman multigraph has a contact representation with circular arcs.
  \item Every plane graph with maximum degree $4$ has a contact representation with circular arcs.
  \item There is a plane $(2,0)$-tight multigraph with no contact representation with circular arcs.
 \end{itemize}
\end{theorem}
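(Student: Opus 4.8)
The plan is to exhibit an explicit witness and then rule out a representation by a short geometric argument; essentially all of the work is in choosing the witness and isolating the right obstruction, after which the proof is immediate. For the witness I would take $G^{\star}$ to be the multigraph on two vertices $u,v$ joined by four parallel edges. Then $|E(G^{\star})| = 4 = 2\,|V(G^{\star})|$, and the only vertex subsets to examine for $(2,0)$-sparsity are the singletons (no induced edges) and $\{u,v\}$ (four induced edges, $4 \le 2\cdot 2$); so $G^{\star}$ is a plane $(2,0)$-tight multigraph. A loopless three-vertex witness works equally well --- for instance three vertices carrying a triple edge, a double edge and a single edge --- and the argument below only uses that two vertices are joined by at least three parallel edges.

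Suppose $G^{\star}$ had a CCA-representation. Since $G^{\star}$ is $(2,0)$-tight there are exactly $2\,|V(G^{\star})|$ arc-endpoints and $2\,|V(G^{\star})|$ edges, and every edge is realized by at least one arc-endpoint touching the interior of another arc; hence the assignment of edges to the arc-endpoints realizing them is a bijection, and in particular every one of the (three or four) parallel $uv$-edges is realized by a contact between the two arcs $a_u$ and $a_v$. Each such contact is a point that is an endpoint of one of $a_u, a_v$ lying in the relative interior of the other.

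The crux is then an observation about the supporting circles. Let $C_u$ and $C_v$ be the circles carrying $a_u$ and $a_v$. Every $uv$-contact point lies on both $C_u$ and $C_v$: as a point of $a_u$ it lies on $C_u$, and as a point of the relative interior of $a_v$ it lies on $C_v$ (and symmetrically). The at-least-three contact points are pairwise distinct: a contact realized by an endpoint of $a_u$ is an interior point of $a_v$, hence not an endpoint of $a_v$, and the two endpoints of one arc are distinct (an arc with coinciding endpoints would be a full circle and could then realize no out-edge, impossible in a $(2,0)$-tight graph). Moreover $C_u \ne C_v$: two arcs of one common circle with disjoint relative interiors cannot have an endpoint of one lie in the relative interior of the other, since near such a point one side would have to belong simultaneously to the interior of one arc and to the other arc. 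So $C_u$ and $C_v$ would be two distinct circles with at least three distinct common points, which is impossible; this contradiction shows $G^{\star}$ has no CCA-representation, which is the last item of the theorem.

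There is little to go wrong in this argument; the only points needing care are to confirm that the finitely many degenerate coincidences are genuinely excluded --- no contact point being an endpoint of both arcs (which would be a vertex--vertex touch and not an edge) and no arc being a full circle --- and to phrase the witness so that it is unambiguously a plane multigraph. I would also record the broader consequence that no plane $(2,0)$-sparse multigraph in which some pair of vertices is joined by three or more parallel edges admits a CCA-representation, which is consistent with the positive result for plane $(2,2)$-sparse graphs, where such pairs cannot occur.
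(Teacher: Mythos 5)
Your proposal addresses only the last of the four bullets. The first three --- that every plane $(2,2)$-sparse graph, every plane co-Laman multigraph, and every plane graph of maximum degree $4$ admits a CCA-representation --- are the substantive positive content of the theorem, and nothing in your argument bears on them. In the paper these are not short: the $(2,2)$-sparse and co-Laman cases rest on inductive construction sequences via $E_3$, $V_4$ and $V_2^2$ moves (Lemmas~\ref{lem:lam+1} and~\ref{lem:moves-generate}) together with the fact that each such move can be carried out inside an existing CCA-representation (Lemma~\ref{lem:moves-in-CCA}), and the maximum-degree-$4$ case goes through good $2^-$-orientations and a circle-packing construction (Lemmas~\ref{lem:good-to-CCA} and~\ref{lem:4-reg}). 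So, judged as a proof of the stated theorem, the proposal has a major gap: three of the four claims are simply not proved.

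For the fourth bullet your argument is correct, and it takes a genuinely different route from the paper's. The paper's witness (Fig.~\ref{fig:20-counter}) uses only doubled edges and is ruled out by a topological parity argument: two arcs touching twice create a closed curve, and their free ends must lie on the same side of it, so the pendant $(2,0)$-tight blocks attached to $u$ and $v$ cannot sit on opposite sides. Your witness (a bundle of four, or even three, parallel edges) is ruled out by the elementary fact that two distinct circles meet in at most two points, after the bookkeeping step --- which you carry out correctly --- that in a $(2,0)$-tight representation each edge consumes its own arc endpoint, so three parallel edges force three distinct common points of the two supporting circles. Your obstruction is embedding-independent, hence strictly stronger in one respect: it rules out a CCA-representation in \emph{every} plane embedding, which is relevant to the open question raised at the end of the paper's proof of this case. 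The paper's more elaborate gadget buys something in return: it shows that the obstruction is not merely high edge multiplicity, since no two of its vertices are joined by more than two parallel edges.

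One further point you should not pass over lightly. Your closing remark --- that no plane $(2,0)$-sparse multigraph containing three or more parallel edges between a single pair of vertices admits a CCA-representation --- is a correct consequence of your circle-intersection argument, but it is \emph{not} merely ``consistent with'' the rest of the theorem: it is in direct tension with the second bullet, since the base graph $G_0$ used for the co-Laman case (two vertices joined by three parallel edges, Fig.~\ref{fig:base-case-CCA-B}) is itself co-Laman. If your argument is right (and I see no flaw in it under the paper's stated definition, where each edge must be realized by an endpoint of one arc in the relative interior of the other), then the co-Laman statement needs at least a reformulation excluding edge multiplicities above two, or a different notion of contact at the base case. Flagging this explicitly, rather than describing it as consistency, would considerably strengthen your write-up.
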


The theorem above directly implies the following corollary.
\begin{corollary}
 For every $k \in \{0,1,2,3,4\}$ and every plane $(2,k)$-tight graph $G$, the plane dual $G^*$ of $G$ has a contact representation with circular arcs, whenever $G^*$ is $(2,4-k)$-tight.
\end{corollary}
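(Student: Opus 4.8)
The plan is a short case analysis on $k$, reducing the claim about $G^{*}$ in each case to one of the three positive items of Theorem~\ref{thm:CCA}. The only real content is to recognize, for each $k$, which already-known CCA-representable class the hypothesis that $G^{*}$ is $(2,4-k)$-tight — together with the dual-structure facts recorded just before Theorem~\ref{thm:CCA} — puts $G^{*}$ into.

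For $k\in\{0,1,2\}$ I would observe that a plane $(2,4-k)$-tight graph is in particular plane $(2,2)$-sparse: since $4-k\ge 2$, the bound $2|W|-(4-k)$ on the relevant subgraphs is at most $2|W|-2$, and in the boundary cases $|W|\le 2$ one checks that these graphs are simple (for $4-k\in\{3,4\}$) or carry at most double edges (for $4-k=2$), which is exactly what $(2,2)$-sparseness permits. Hence the first item of Theorem~\ref{thm:CCA} supplies a CCA-representation of $G^{*}$. For $k=2$ this is unconditional, because by Lemma~\ref{lem:duals} the dual of a plane $(2,2)$-tight graph is always $(2,2)$-tight; for $k\in\{0,1\}$ we genuinely use the stated hypothesis. (An alternative route for $k\in\{0,1\}$: a $(2,3)$-tight or $(2,4)$-tight planar graph is $(2,3)$-sparse, hence has a contact representation by line segments, i.e.\ by circular arcs of infinite radius.)

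For $k=3$ I would invoke the fact, stated immediately after Lemma~\ref{lem:duals}, that the plane dual of any plane $(2,3)$-tight graph is a plane co-Laman graph, and then apply the second item of Theorem~\ref{thm:CCA}. For $k=4$ I would similarly use that the dual of a plane $(2,4)$-tight graph is $4$-regular, so $G^{*}$ has maximum degree $4$ and the third item of Theorem~\ref{thm:CCA} applies. Note that for $k\in\{2,3,4\}$ the hypothesis on $G^{*}$ is automatic by Lemma~\ref{lem:duals}, so the corollary really only has content (and the hypothesis is really needed) when $k\in\{0,1\}$: indeed Lemma~\ref{lem:duals} exhibits plane $(2,0)$- and $(2,1)$-tight graphs with non-sparse duals, and the fourth item of Theorem~\ref{thm:CCA} even gives a plane $(2,0)$-tight multigraph with no CCA-representation whatsoever.

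There is no serious obstacle once Theorem~\ref{thm:CCA} and the two dual-structure facts are in hand. The one place needing care is the degenerate end of the sparsity definition — loops, parallel edges, and the convention that the $(2,3)$- and $(2,4)$-constraints are imposed only on subgraphs with $|W|\ge 3$ — which must be checked so that the reduction to $(2,2)$-sparseness in the cases $k\in\{0,1,2\}$ is legitimate.
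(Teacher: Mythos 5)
Your proposal is correct and is precisely the argument the paper intends: the corollary is stated as a direct consequence of Theorem~\ref{thm:CCA}, obtained by the same case analysis you give (for $k\in\{0,1,2\}$ the hypothesis makes $G^*$ a plane $(2,2)$-sparse graph, for $k=3$ Lemma~\ref{lem:duals} and the remark following it make $G^*$ co-Laman, and for $k=4$ they make $G^*$ $4$-regular, so one of the first three items of the theorem applies in each case). Your additional observations --- that the hypothesis is automatic for $k\in\{2,3,4\}$ and that segment contacts give an alternative route for $k\in\{0,1\}$ --- match remarks already made elsewhere in the paper.
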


We use two different approaches to construct CCA representations.
The first approach is a constructive one, and it can be used for plane $(2,2)$-tight graphs and plane co-Laman graphs. We find a special construction sequence for each graph in these two classes (similar to the Henneberg moves~\cite{Hen11} for $(2,3)$-tight graphs, also see~\cite{Whiteley96,FJW04,Nix11}), and we show that this construction sequence can be modified into a construction sequence for a CCA-representation. The second approach is structural.
For some  planar $(2,0)$-sparse graphs, in particular for all graphs of maximum degree $4$,
 we can obtain a stronger form of contact representation where the convex hull of each arc is empty.
For these graphs we define the notion of a \emph{good $2$-orientation}, and use a circle packing construction to find this stronger CCA-representation from the good orientation. However, we show

\begin{theorem}
\label{thm:empty-hull-NPC}
Testing whether a planar $(2,0)$-tight graph has a contact representation where the convex hull of each arc is empty is NP-complete, even for graphs of maximum degree $5$.
\end{theorem}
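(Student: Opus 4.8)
The plan is to work through the characterization proved above: a plane graph admits an empty-hull CCA-representation if and only if it has a \emph{good $2$-orientation}. Membership in NP is then immediate. A $2$-orientation of a plane graph $G$ with a fixed rotation system is described by $O(|E(G)|)$ bits, and ``goodness'' is a conjunction of local conditions --- out-degree at most $2$ at every vertex together with the prescribed cyclic pattern of the outgoing edges in each rotation (plus whatever global condition the characterization imposes) --- all checkable in linear time. So a good $2$-orientation is a polynomial certificate.

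For hardness I would reduce from a planar variant of \textsc{3-SAT} --- plain \textsc{Planar 3-SAT}, or a monotone or not-all-equal planar version, whichever matches the logical primitive of the gadgets most cleanly --- using the crossing-free layout of the variable--clause incidence graph that such problems come equipped with. From a formula $\varphi$ I would build a plane $(2,0)$-tight graph $G_\varphi$ of maximum degree $5$ out of three kinds of gadgets: a \emph{variable gadget}, a bounded-size plane block whose good $2$-orientations fall into exactly two ``phases'' encoding the variable's truth value, readable off several boundary edges; \emph{wires} and \emph{splitters}, built from low-degree vertices, that transmit a phase and let it be duplicated, negated, and routed around the layout without introducing any fresh orientation freedom; and a \emph{clause gadget}, joined to three incoming wires, that has a good $2$-orientation precisely when the clause is satisfied. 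I expect to verify each gadget by a finite case analysis over the admissible local patterns at its vertices. The degree-$5$ vertices are the heart of the construction: since the positive result above shows that every plane graph of maximum degree $4$ already has an empty-hull representation, every hard instance must contain degree-$5$ vertices, and the branching behaviour of the gadgets has to be concentrated there.

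The hard part will be making every gadget simultaneously $(2,0)$-sparse, embeddable with the prescribed rotation system, of maximum degree $5$, and such that the assembled $G_\varphi$ is genuinely $(2,0)$-tight. Sparsity is the binding constraint: every $s$-vertex subgraph may carry at most $2s$ edges, so one cannot simply add auxiliary edges to force the desired logic --- the gadgets must be exactly as rigid as needed and no more, and the vertex/edge counts must balance across the shared boundaries where gadgets are glued. I would obtain tightness either by designing each gadget to be $(2,0)$-tight on its own and gluing so the deficits cancel, or by first producing a $(2,0)$-sparse $G_\varphi$ and then applying Lemma~\ref{lem:augmentation} to embed it in a tight plane supergraph --- the latter only if the augmenting edges can be confined to a region disjoint from the gadget interiors and add no vertex of degree above $5$, which is itself something to be checked carefully. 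It is also worth arranging that $G_\varphi$ has an essentially unique embedding (e.g.\ by making it $3$-connected) so the statement holds whether or not the embedding is part of the input. Once such a construction is in place, the equivalence is routine: a satisfying assignment fixes the phases of the variable gadgets, which propagate along the wires to a consistent good $2$-orientation of $G_\varphi$, and conversely any good $2$-orientation forces consistent phases that read off a satisfying assignment; since $|G_\varphi|$ is polynomial in $|\varphi|$, this proves NP-hardness, and with the NP membership above, NP-completeness.
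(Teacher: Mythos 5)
Your plan has exactly the skeleton of the paper's proof: membership in NP via the good-$2$-orientation certificate of Lemma~\ref{lem:good-equivalents}, hardness by a reduction from a planar SAT variant built from variable, wire/splitter, and clause gadgets, $3$-connectivity to pin down the embedding, and the observation that all the hardness must be carried by degree-$5$ vertices because of the positive result for maximum degree $4$. But the proposal stops precisely where the proof starts: no gadget is actually constructed, and the gadgets \emph{are} the theorem. Everything you flag as ``the hard part'' --- choosing the logical primitive, making the gadgets $(2,0)$-sparse and degree-$5$, balancing counts so the assembled graph is tight --- is left as a to-do list, so as written this is a research plan rather than a proof.

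For comparison, here is how the paper resolves each of the open design questions you raise. The source problem is \textsf{Positive Planar 1-in-3SAT}, chosen because the clause condition ``exactly one true'' is what a single degree-$3$ vertex naturally checks (it verifies that exactly two incoming wires carry a false signal); plain or NAE Planar 3-SAT would not match the gadgets. The entire signal-routing machinery rests on one local principle: at a degree-$4$ vertex, a good $2$-orientation must pair each incoming edge with the \emph{opposite} outgoing edge. Hence a wire is a chain of doubled edges, a variable is a wire closed into a cycle (which then has exactly two good $2$-orientations, encoding true/false), and $(2,0)$-tightness forces out-degree exactly $2$ everywhere so the wires are rigid signal carriers with no slack --- this is how sparsity and tightness are handled, not by invoking Lemma~\ref{lem:augmentation} afterwards (which would risk creating high-degree vertices and destroying the gadget logic). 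Finally, the construction is carried out first as a plane multigraph; multiple edges are then removed by surrounding each vertex with a small chain, and $3$-connectivity is obtained by enclosing everything in a doubled triangle whose edges are routed through all variable gadgets, with new vertices at the crossings. The maximum degree $5$ arises only inside the clause gadgets. Without exhibiting gadgets with these (or equivalent) verified properties, the reduction is not established.
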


Finally we consider contact representation with \emph{wedges} (that is, polyline segments with at most
one bend). A wedge can be viewed as a sequence of two circular arcs (straight-line segments).
It is not difficult to prove that every planar $(2,0)$-sparse graph has a contact representation with polylines composed of three
segments~\cite{FM07-segment}. On the other hand, as pointed out earlier, one segment per polyline is not sufficient. This raises a question
(asked in~\cite{FM07-segment}) whether every planar $(2,0)$-sparse graph has a contact representation with polylines composed of two segments.
We resolve the question by showing that every plane $(2,0)$-sparse graph has a contact representation with wedges.

\wormhole{thm-wedge}
\begin{theorem}
\label{thm:wedge-representation}
 Every plane $(2,0)$-sparse graph has a contact representation where each vertex is represented by a wedge.
\end{theorem}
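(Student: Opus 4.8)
The plan is to reduce to the $(2,0)$-tight case and then build the representation half-wedge by half-wedge. By Lemma~\ref{lem:augmentation}, every plane $(2,0)$-sparse graph is a spanning subgraph of a plane $(2,0)$-tight graph, so it suffices to produce a wedge contact representation of a $(2,0)$-tight graph $G$ and then delete the superfluous contacts. The deletion is easy: first perturb the representation generically so that no point of the plane is a contact for two different edges; then a superfluous edge is realized by a tip of some wedge touching the interior of another wedge, and since a tip is an endpoint of the curve it is never the passive side of a contact, so we may retract it by an arbitrarily small amount along its incident segment, destroying precisely that contact and nothing else while leaving a valid wedge. Hence from now on $G$ is $(2,0)$-tight; in particular $|E(G)|=2|V(G)|$ and $G$ has an orientation in which every vertex has out-degree exactly $2$.

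Fix such an orientation and split the two out-edges at each vertex into two classes, giving two spanning pseudoforests $F_1,F_2$ (each a disjoint union of directed cycles with in-trees attached, since every vertex has out-degree exactly $1$ in each). I want to realize $F_1$ by the ``first half'' and $F_2$ by the ``second half'' of every wedge: represent vertex $v$ by a wedge $s_1(v)\cup s_2(v)$ of two segments sharing the bend endpoint $b_v$, where the $s_1(\cdot)$ form a segment contact representation of $F_1$ (the non-bend tip of $s_1(v)$ lies on the interior of $s_1$ of the $F_1$-out-neighbour of $v$, and every $F_1$-in-edge at $v$ touches the interior of $s_1(v)$), and symmetrically for $F_2$; the two non-bend tips carry $v$'s two out-edges and $b_v$ is the unused common endpoint. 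For this to respect the given plane embedding, one needs, around each vertex $v$, the incident edges to appear cyclically as $\bigl[\,\text{out}_1,\ (F_1\text{-in-edges}),\ (F_2\text{-in-edges}),\ \text{out}_2,\ (F_2\text{-in-edges}),\ (F_1\text{-in-edges})\,\bigr]$; equivalently, the edges of $F_1$ (hence those of $F_2$) must occupy a contiguous block of the rotation at $v$, with the $F_i$-out-edge sitting anywhere inside the $F_i$-block. Thus the pseudoforest decomposition cannot be chosen arbitrarily: it must be rotation-compatible.

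Granting such a compatible decomposition, the representation is assembled incrementally: lay the segments of one directed cycle of $F_i$ as a ``pinwheel'' (each tip on the interior of the next), attach the in-trees one segment at a time, and nest the separate components according to the faces of $G$; at each step only a bounded piece of geometry is inserted into a single face, so there is always room, and a single bend per curve always suffices because all contacts on $s_i(v)$ lie on the one segment and are distributed to its two sides exactly according to the block structure above. The real obstacle is therefore the combinatorial core — proving that a rotation-compatible decomposition into two spanning pseudoforests (with the two out-edges of each vertex in different classes) always exists for a plane $(2,0)$-tight graph. This is where planarity is essential, and I expect it is cleanest to build the decomposition hand in hand with the augmentation of Lemma~\ref{lem:augmentation} rather than to extract it afterwards. (One could alternatively start from an arbitrary curve contact representation, which exists since $G$ is planar and $(2,0)$-sparse, and try to straighten each curve to a two-segment polyline topologically; but making that rigorous runs into essentially the same per-vertex bookkeeping.)
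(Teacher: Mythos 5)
Your proposal has a genuine gap at exactly the place you flag yourself: the existence of a ``rotation-compatible'' decomposition into two spanning pseudoforests (with the two out-edges of each vertex in different classes and each class consecutive in every rotation) is the load-bearing combinatorial claim, and you leave it as ``I expect it is cleanest to build\dots''. This is not a formality. Your scheme forces \emph{both} ends of every edge to commit to the same class --- the tip of $s_i(u)$ must land on $s_i(v)$, not merely on some segment of $v$'s wedge --- which couples the local choices globally; the paper's own Fig.~\ref{fig:counter} shows that superficially similar local-consecutiveness demands (good $2$-orientations) can fail for planar $(2,0)$-sparse graphs even with minimum degree $3$, so optimism is not warranted without a proof. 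The incremental geometric assembly (``there is always room'') is also asserted rather than argued. The reduction to the tight case via Lemma~\ref{lem:augmentation} and the tip-retraction argument for deleting superfluous contacts are fine, but they are doing work that turns out to be unnecessary.

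The actual proof is far simpler and avoids your combinatorial core entirely: take a $2^-$-orientation of $G$, take \emph{any} straight-line drawing of the plane graph, and define the wedge of $v$ to be the union of the segments of its (at most two) outgoing edges, bent at the position of $v$. Every incoming edge at $v$ then makes its contact at the bend point of $v$'s wedge; a small perturbation spreads these contacts onto the two segments near the bend, in the cyclic order given by the drawing. The key point you missed is that which of the two segments of $v$'s wedge an incoming tip lands on is a \emph{free local choice} made near the bend point (any assignment respecting the angular order around $v$ works), so no global two-coloring of the edges is needed; only the pairing of the two out-edges at each vertex matters, and that is given by the orientation alone. Your parenthetical dismissal of the ``straighten a drawing'' route as running into ``the same per-vertex bookkeeping'' is exactly backwards --- that route is the one that eliminates the bookkeeping.
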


\section{Preliminaries}\label{sec:preliminaries}

Here we give some preliminary results on $(2,k)$-tight graphs and prove
 Lemmas~\ref{lem:augmentation}~and~\ref{lem:duals}.
We start by listing some known characterizations of $(2,k)$-tight graphs for $k=2,3$.

\begin{lemma}\label{lem:disjoint-trees}
 A graph $G$ is $(2,2)$-tight if and only if $G$ is the union of two edge-disjoint spanning trees (Tay~\cite{Tay91}; see also~\cite{Nixon-phd,NW61,Tut61}).
$G$ is $(2,3)$-tight if and only if for every edge $e$ of $G$ the graph obtained by doubling $e$ is $(2,2)$-tight. (Graver~et al.~\cite{GSS93})
\end{lemma}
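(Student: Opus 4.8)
The plan is to prove the two stated equivalences separately, since they concern different values ($k=2$ and $k=3$) and rely on different tools. For the $(2,2)$ statement I would reduce to the Nash--Williams arboricity theorem and then upgrade ``two forests'' to ``two spanning trees'' by an edge count. For the $(2,3)$ statement I would argue entirely by elementary counting, the only real idea being to choose which edge to double according to the vertex subset $W$ that I am trying to control.

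For the first equivalence, consider the direction ($\Leftarrow$) first: suppose $G=T_1\cup T_2$ is the edge-disjoint union of two spanning trees. Then $|E|=2(|V|-1)=2|V|-2$, so the count is correct, and for any $W\subseteq V$ the edges of $T_i$ with both endpoints in $W$ form a forest on $W$, hence number at most $|W|-1$; summing over $i\in\{1,2\}$ gives $|E[W]|\le 2|W|-2$, which for every nonempty $W$ is exactly the $(2,2)$-sparse bound $\max\{2|W|-2,|W|-1\}$. Thus $G$ is $(2,2)$-tight. For ($\Rightarrow$), note that $(2,2)$-sparsity says $|E[H]|\le 2(|V(H)|-1)$ for every subgraph $H$ on at least two vertices, which is precisely the hypothesis of the Nash--Williams arboricity theorem for arboricity at most $2$; hence $E$ decomposes into two forests $F_1,F_2$. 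Since $|F_1|+|F_2|=|E|=2|V|-2$ and each $|F_i|\le|V|-1$, we must have $|F_1|=|F_2|=|V|-1$, and a forest with $|V|-1$ edges on $|V|$ vertices is a spanning tree. This is exactly where \emph{tightness}, rather than mere sparsity, is used.

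For the second equivalence, let $G'=G+e'$ denote the multigraph obtained by doubling an edge $e=uv$. For ($\Rightarrow$), assume $G$ is $(2,3)$-tight; then $|E(G')|=2|V|-2$, the correct count. To verify $(2,2)$-sparsity of $G'$ I split on $W$: if $W$ does not contain both $u$ and $v$, then $E_{G'}[W]=E_G[W]$ and the $(2,3)$-bound gives $|E_{G'}[W]|\le 2|W|-3<2|W|-2$ (the case $|W|=1$ giving $0$); if $W$ contains both $u$ and $v$, then $|W|\ge 2$ and $|E_{G'}[W]|=|E_G[W]|+1\le(2|W|-3)+1=2|W|-2$. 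Hence $G'$ is $(2,2)$-tight for every choice of $e$. For ($\Leftarrow$), assume doubling any edge yields a $(2,2)$-tight graph; the count $2|V|-2$ for $G'$ forces $|E(G)|=2|V|-3$. For sparsity, fix $W$ with $|W|\ge 2$: if $E_G[W]=\emptyset$ the bound $2|W|-3\ge 1$ holds trivially, and otherwise I pick an edge $e$ with both endpoints in $W$ and double it. Since the resulting $G'$ is $(2,2)$-sparse and $W$ contains both endpoints of $e$, we get $|E_G[W]|+1=|E_{G'}[W]|\le 2|W|-2$, i.e.\ $|E_G[W]|\le 2|W|-3$, so $G$ is $(2,3)$-tight. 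The one genuinely clever step is choosing the doubled edge to lie inside the critical set $W$.

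The main obstacle is essentially the single external ingredient, the Nash--Williams arboricity theorem, which carries the first part; everything else is bookkeeping with the two regimes of the $\max$ in the sparsity definition (the threshold $|W|=1$ for $k=2$ and $|W|=2$ for $k=3$), which must be checked so that the claimed bounds are the operative ones. I would also double-check that every counting argument remains valid for multigraphs, since the doubling operation inherently produces parallel edges; both the arboricity theorem and the induced-forest bound continue to hold in that setting.
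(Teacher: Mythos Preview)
The paper does not actually prove this lemma: it is stated in the preliminaries as a known characterization, with citations to Tay, Nash--Williams, Tutte, Nixon, and Graver~et~al., and no argument is given. So there is nothing to compare your approach to on the paper's side.

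Your proof is correct and is in fact the standard one. The first equivalence is exactly the Nash--Williams/Tutte route (arboricity $\le 2$ from sparsity, then the global edge count $2|V|-2$ pins each forest at $|V|-1$ edges, forcing two spanning trees); the second equivalence is the straightforward counting argument, with the key observation that for the backward direction one should double an edge lying \emph{inside} the subset $W$ under consideration. One tiny boundary case you may want to mention explicitly: in the $(\Leftarrow)$ direction of the $(2,3)$ statement you silently use that $G$ has at least one edge (otherwise the hypothesis is vacuous but $G$ is not $(2,3)$-tight); this is harmless since a $(2,3)$-tight graph on $n\ge 2$ vertices necessarily has edges, but it is worth a clause.
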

%
%

An \df{orientation} of a graph $G$ is a directed graph whose underlying undirected graph is~$G$. We call it a \df{$2$-orientation} if every vertex has out-degree exactly $2$, and a \df{$2^-$-orientation} if every vertex has out-degree at most $2$.

Lemma~\ref{lem:orientations} characterizes $(2,i)$-tight graphs in terms of their orientations.

\wormhole{lem:orient}
\begin{lemma}
\label{lem:orientations}
 Let $G = (V,E)$ be a graph. Then each of the following holds.
~\\[-4ex]
 \begin{enumerate}[\rm (i)]
  \item $G$ is $(2,0)$-tight if and only if $G$ admits a $2$-orientation.\label{enum:orient-20}

  \item $G$ is $(2,1)$-tight if and only if $G$ plus a loop at an arbitrary vertex $v \in V$ admits a $2$-orientation in which for every $u \in V$ there exists a directed $u$-to-$v$ path.\label{enum:orient-21}

  \item $G$ is $(2,2)$-tight if and only if $G$ after adding two loops at an arbitrary vertex $v \in V$ has a $2$-orientation where for every $u \in V$ there are two edge-disjoint directed $u$-to-$v$ paths.\label{enum:orient-22}

  \item If $G$ is $(2,k)$-sparse for some $k \geq 0$, then $G$ admits a $2^-$-orientation.\label{enum:orient-2k}
 \end{enumerate}
\end{lemma}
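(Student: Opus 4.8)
The plan is to deduce all four parts from a single classical fact about degree-constrained orientations: for a function $g\colon V\to\Nat$, a (multi)graph $G$ has an orientation with $\outdeg(w)=g(w)$ (respectively $\outdeg(w)\le g(w)$) for every $w$ if and only if $|E[W]|\le\sum_{w\in W}g(w)$ for every $W\subseteq V$, and, in the ``$=$'' version, also $|E|=\sum_{w\in V}g(w)$. This is standard and is most easily obtained by applying Hall's theorem to the bipartite graph joining each edge to the $g(w)$ ``slots'' of each of its two endpoints (a loop at $w$ being counted once toward $\outdeg(w)$). I will also use repeatedly the elementary identity that in any orientation $|E[W]|$ equals the number of out-edges of vertices of $W$ whose head also lies in $W$, so that $|E[W]|\le\sum_{w\in W}\outdeg(w)$, with equality precisely when no edge is directed out of $W$. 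Note that all four statements are purely combinatorial, so planarity plays no role here.

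Parts (i) and (iv) are then immediate with $g\equiv 2$. For (i): the hypotheses $|E|=2|V|$ and $|E[W]|\le 2|W|$ for all $W$ are exactly the definition of $(2,0)$-tightness, and conversely a $2$-orientation gives $|E|=\sum_w\outdeg(w)=2|V|$ and $|E[W]|\le 2|W|$ by the identity above. For (iv): every $(2,k)$-sparse graph with $k\ge 0$ satisfies $|E[W]|\le 2|W|$ for all $W$ (this holds for the ``extended'' definitions with $k\in\{3,4\}$ as well), so the ``$\le g$'' version with $g\equiv 2$ yields a $2^-$-orientation; equivalently, one can first augment $G$ to a $(2,0)$-tight graph and restrict the $2$-orientation produced by (i).

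For parts (ii) and (iii) I would start from the observation that adding one loop (resp.\ two loops) at $v$ to a $(2,1)$-tight (resp.\ $(2,2)$-tight) graph $G$ produces a $(2,0)$-tight graph $G'$: the edge count becomes $2|V|$, and every vertex set $W$ has enough slack $2|W|-|E[W]|$ (at least $1$, resp.\ $2$, whether or not $v\in W$) to absorb the added loop(s). By (i), $G'$ admits a $2$-orientation, and the content of (ii) and (iii) is that the reachability conditions then hold \emph{automatically} in any such orientation. For (ii): if $S$ is the set of vertices from which $v$ is not reachable then $v\notin S$ and no out-edge of an $S$-vertex leaves $S$, so $|E'[S]|\ge 2|S|$, hence $=2|S|$ by $(2,0)$-sparsity; since $v\notin S$ the loop is not inside $S$, so $|E[S]|=2|S|$, contradicting $(2,1)$-sparsity unless $S=\emptyset$. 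For (iii) the same argument runs through Menger's theorem: if some $u\ne v$ has no two edge-disjoint directed paths to $v$, there is a set $F$ of at most one edge meeting every directed $u$-to-$v$ path; letting $S$ be the vertices that cannot reach $v$ in $G'-F$, at most one edge of $G'$ is directed out of $S$, so $|E'[S]|\ge 2|S|-1$, and with $v\notin S$ this gives $|E[S]|\ge 2|S|-1$, contradicting $(2,2)$-sparsity.

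The converse halves of (ii) and (iii) unwind the same inequalities: a $2$-orientation of $G'$ forces $|E|=2|V|-1$ (resp.\ $2|V|-2$) and, via the identity, $|E'[W]|\le 2|W|$ for all $W$; for $W\ni v$ this gives $|E[W]|=|E'[W]|-1\le 2|W|-1$ (resp.\ $|E'[W]|-2\le 2|W|-2$); and for nonempty $W\not\ni v$, if $|E[W]|=|E'[W]|$ were at least $2|W|$ (resp.\ at least $2|W|-1$) then the number of edges directed out of $W$ would be $0$ (resp.\ at most $1$), so some vertex of $W$ could not reach $v$ by a directed path (resp.\ by two edge-disjoint directed paths), contradicting the hypothesis. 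Hence $|E[W]|\le 2|W|-1$ (resp.\ $\le 2|W|-2$) for all $W$, so $G$ is $(2,1)$-tight (resp.\ $(2,2)$-tight). I expect the only real difficulty to be expository rather than mathematical: fixing the loop convention so that the ``$2$-orientation'' bookkeeping is consistent, phrasing the Menger reduction in (iii) cleanly, and---if one wants the argument self-contained---inserting a short proof of the degree-constrained orientation theorem, which is doing the heavy lifting in (i) and (iv).
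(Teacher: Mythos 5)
Your proof is correct, but it takes a genuinely different route from the paper. The paper derives all four parts from the Lee--Streinu pebble game: it runs the game, uses the invariant that a vertex holding $i$ pebbles has out-degree $2-i$, and obtains the reachability conditions in (ii) and (iii) from the non-existence of $(2,0)$- or $(2,1)$-tight subgraphs avoiding $v$, with the arbitrariness of $v$ coming from moving pebbles backwards along paths. You instead invoke the classical degree-constrained orientation theorem (Hakimi/Hall) with $g\equiv 2$ for (i) and (iv), reduce (ii) and (iii) to (i) by adding loops at $v$, and extract the connectivity conditions from the cut identity $|E[W]|=\sum_{w\in W}\outdeg(w)-|\delta^+(W)|$ together with (for (iii)) the edge version of Menger's theorem. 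Both arguments lean on one external classical result of comparable weight (the pebble-game characterization versus the orientation theorem), so neither is more self-contained than the other. Your version has two concrete advantages: it explicitly proves the converse halves of (i)--(iii), which the paper's proof does not address (the forward directions are the only ones argued there, even though the statements are equivalences), and it establishes the stronger fact that the reachability conditions hold in \emph{every} $2$-orientation of the loop-augmented graph, not merely in some. The paper's version, in exchange, is algorithmic and reuses machinery (the pebble game) that is needed elsewhere in the paper anyway, e.g.\ in the $k=1$ case of the augmentation lemma. The only point to tighten in your write-up is the loop bookkeeping: you should fix once that a loop at $w$ contributes $1$ to $|E[W]|$ when $w\in W$ and $1$ to $\outdeg(w)$, and check that the Hall-type argument and the cut identity remain valid under that convention (they do).
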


\begin{proof}
We prove this lemma, following the pebbling game of Lee and Streinu~\cite{Lee08}, according to which a graph is $(2,k)$-sparse (for $0\leq k< 4$) if and only if it is possible to construct it, starting from a graph with no edges and two pebbles per vertex, by a sequence of the following operations:
\begin{enumerate}[(O1)]
 \item Add an edge between two vertices that together have at least $k+1$ pebbles while no vertex has more than two pebbles.
 Remove one of the pebbles and direct the edge away from the vertex from which it was removed.\label{enum:pebble-1}
 \item Move a pebble backwards along a directed path in the graph, to a vertex that did not already have two pebbles, reversing the orientation of all the edges it moves along.\label{enum:pebble-2}
\end{enumerate}
An easy induction shows that at each step of this process each vertex with $i$ pebbles on it has exactly $2-i$ outgoing edges. Note that case~(\ref{enum:orient-2k}) of the lemma follows immediately, since every $(2,k)$-sparse graph is also $(2,0)$-sparse. We now prove for the remaining three cases.

 For case~(\ref{enum:orient-20}), observe that each operation~(O\ref{enum:pebble-1}) uses up one of the initial $2|V|$ pebbles, so a $(2,k)$-tight graph must have exactly $k$ pebbles remaining at the end of the process. In particular a $(2,0)$-tight graph has no pebbles and each vertex has exactly $2$ outgoing edges.

 For case~\ref{enum:orient-21}, the oriented graph resulting from the pebbling game will have a pebble on exactly one vertex $v$. There can be no subset of vertices that does not contain $v$ and has no outgoing edges, because such a subset would induce a $(2,0)$-tight subgraph, so every vertex has a path to $v$. Adding a self-loop at $v$ gives the desired structure. The vertex $v$ can be chosen arbitrarily, because the pebble can be moved backwards along the path that connects any other vertex to $v$.

 For case~\ref{enum:orient-22}, the pebbling game leaves two pebbles on vertices. There can be no subset of vertices for which the number of pebbles in the subset plus the number of edges escaping the subset is less than two, because such a subset would induce a $(2,0)$-tight or $(2,1)$-tight subgraph. Therefore, by similar reasoning to case~\ref{enum:orient-21}, these two pebbles can both be moved to an arbitrarily chosen vertex $v$, after which every other vertex must have two edge-disjoint oriented paths to $v$.
\end{proof}


Next we prove Lemma~\ref{lem:augmentation}, that for $k \in \{0,1,2,3\}$ every plane $(2,k)$-sparse graph can be augmented by adding edges into a plane $(2,k)$-tight graph. For $k=4$ such an augmentation exists if and only if the graph is bipartite. We distinguish the five cases for $k$.

\begin{description}
 \item[Case~1, $k = 0$.] If we allow multigraphs, then it is easy to augment $G$.
By Lemma~\ref{lem:orientations}, $G$ has an orientation with $\outdeg(v)\le 2$ for each $v\in V$.
 If $\outdeg(v) = 2$ for each $v \in V$, $G$ is already $(2,0)$-tight. Otherwise, take a vertex $v$ with $\outdeg(v) \leq 1$ and double the outgoing edge incident to $v$ to increase the out-degree without breaking planarity.

We prove the stronger claim
that any plane $(2,0)$-sparse simple graph with $n\ge 6$
 vertices can be augmented to a plane $(2,0)$-tight simple graph. Assume for a
 contradiction that there is a plane $(2,0)$-sparse simple graph with $n\ge 6$ vertices
 that cannot be augmented in this way, and let $G$ be such a graph
 with the maximum number of edges. Since $G$ is not $(2,0)$-tight and $n\ge 6$, $G$ has
 a non-triangular face. If $G$ has no $(2,0)$-tight subgraph, then we can add an edge
 inside such a face without violating $(2,0)$-sparseness and planarity, contradicting
 the maximality of $G$. Otherwise, there is an inclusion-maximal $(2,0)$-tight subgraph
 $H$ of $G$. Choose an edge $e = (u,v)$ in $H$ and a vertex $w$ not
 in $H$, on the same face of $G$. Since $H$ is inclusion-maximal, at least one of
 the edges $(u,w)$ and $(v,w)$ is not in $G$ and can be added, again
 contradicting the maximality of $G$.

\item[Case~2, $k = 1$.] We play the pebble game of Lee and Streinu~\cite{Lee08} for the $(2,1)$-sparse
 graph and end with a directed graph $D=(V,\bar E)$ whose vertices might contain pebbles.
 Let $x\in V$ be a vertex with a pebble and let $V_x$ be the set of vertices that have a path to $x$ in $D$. If two sets $V_x$ and $V_y$ intersect (with $x\not =y$) we can move two pebbles to the endpoint of an edge and then add a copy of this edge while following the pebble game rules, giving a valid augmentation of the graph. Otherwise, for each $x\not =y$, the sets $V_x$ and $V_y$ are disjoint.
 We can further assume that there is no edge between two different vertex sets $V_x$ and $V_y$ since this edge would extend one of the sets.

 Now let $\bar V$ denote all vertices that are in none of the sets $V_x$. We have to rule out the case that $D$ restricted to $\bar V$ ``separates'' the sets $V_x$. Assume that $\bar V \not = \emptyset$. Let $e$ be the edge last added with both endpoints in $\bar V$. To add $e$, we needed two pebbles. One pebble remained at some vertex $u\in \bar V$ after the addition of $e$. Since $\bar V$ has no pebbles, this pebble must have been removed from $\bar V$, leaving a path that connects $u$ to the vertex where the pebble remains, or where it was consumed. This place lies outside of $\bar V$, say in $V_x$. This shows that $u\in V_x$,  a contradiction.

 The augmentation can now be carried out as follows. The graph $D$ decomposes into connected components, associated with the sets $V_x$. Select two components, move the free pebbles to the boundaries, and add an edge that connects the two pebbled vertices.

 \item[Case~3, $k = 2$.] By Lemma~\ref{lem:disjoint-trees}, a graph is $(2,2)$-tight if and only if its edges can be partitioned into two spanning trees. A partition of the edges into two forests $F_1$ and $F_2$ can be obtained by several known algorithms,~\cite{RT85} for example. Next find a face $f$ and two vertices $u$, $v$ on $f$ that are not in the same component of one of the forests, say $F_1$. If $f$ has length at least 4, one can add an edge into it and put this edge into $F_1$ so that two components of $F_1$ merge. Otherwise $f=uvw$ is a triangle and exactly two edges of $f$ are in $F_2$ and one in $F_1$, say $vw$. Now we consider the face $f'$ on the other side of the edge $uw$ and repeat the argument until we find a face of length at least $4$ this way. We find some face of length at least $4$ at some point; since otherwise we have a cycle of faces, which gives a cycle in $F_1$ or $F_2$, a contradiction. Repeated insertion of edges in this way eventually turns both $F_1$ and $F_2$ into trees and hence $G$ becomes $(2,2)$-tight.

 \item[Case~4, $k = 3$.] If $G=(V,E)$ is $(2,3)$-sparse but has fewer than $2|V|-3$ edges, then it represents a flexible bar-joint framework, with at least one nontrivial motion, and a face $f$ that is not rigid during this motion. Let $x$ be a vertex on $f$ that changes its angle during the motion. Further, let $v_1,v_2,\ldots ,v_k$ be the vertices of $f$ that are visible from $x$ in circular order. There must exist two vertices $v_i$ and $v_{i+1}$ for which the angle $\angle v_i x v_{i+1}$ changes. If edge $e= (v_i,v_{i+1})$ is not in $G$ we add $e$ to $G$. If $e$ already exists, at least one distance between $x$ and $v_{i}$ or between $x$ and $v_{i+1}$ changes during the motion; we add the corresponding edge to $G$. In all cases we added an edge with endpoints in different rigid components of $G$, eliminating one degree of freedom from the space of motions of $G$. Hence, $G$ is a Laman graph or a subgraph of a Laman graph and remains $(2,3)$-sparse. Continue to add edges, until the graph is $(2,3)$-tight. See~\cite{ourJGAA12}
for an alternate proof.


 \item[Case~5, $k=4$.] If a $(2,4)$-sparse planar graph is not bipartite, then it remains non-bipartite no matter what is added to it. Thus, it cannot be a subgraph of a maximal bipartite planar graph, which is the same thing as a $(2,4)$-tight planar graph. On the other hand, if a $(2,4)$-sparse graph is bipartite but not tight, then in any planar embedding it has a face with six or more vertices. In such a face, one can find three mutually-crossing properly colored diagonals. It is not possible for all three to already be part of the graph, because this would cause the graph to contain a $K_{3,3}$ minor. Therefore, one of these diagonals can be added, splitting the face while preserving bipartiteness. Repeating this operation leads to a graph in which all faces are quadrilaterals, which is $(2,4)$-tight.
\end{description}


To conclude our preliminaries
we prove Lemma~\ref{lem:duals}, relating the sparseness parameters of a graph and its dual.

First, let $\ell \in \Int$ be any integer. We shall construct a plane $(2,0)$-tight graph $G_0$ and a plane $(2,1)$-tight graph $G_1$ whose duals $G_0^*$ and $G_1^*$ contain two vertices $v$ and $w$ that are joined by $5-\ell$ parallel edges. In particular, $G_0^*$ and $G_1^*$ will not be $(2,\ell)$-sparse.

\begin{figure}[tb]
 \centering
 \subfigure[\label{fig:dual-counterexample}]{
  \includegraphics[scale=0.8]{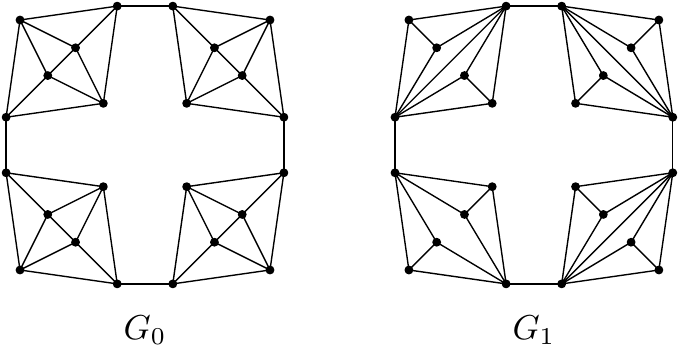}
 }
 \hspace{1em}
 \subfigure[\label{fig:segment-contact-dual}]{
  \includegraphics[scale=0.75]{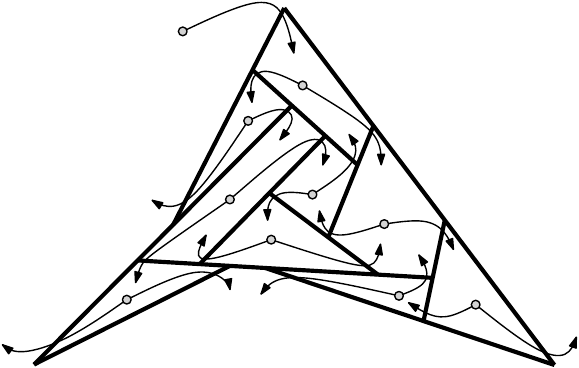}
 }
 \caption{(a)~A plane $(2,0)$-tight graph $G_0$ and a plane $(2,1)$-tight graph $G_1$ whose duals are not $(2,1)$-sparse. (b)~A segment contact representation of a plane $(2,3)$-tight graph and the corresponding orientation of the dual.}
\end{figure}

We start with a cycle $C$ of order $2(5 -\ell)$ and replace every other edge $e$ of $C$ with a certain graph $H$. Let $H_0$ and $H_1$ be any plane $(2,0)$-tight graph and any plane $(2,1)$-tight graph, respectively, and $e_0$, $e_1$ be any edge in $H_0$, $H_1$, respectively. For $G_0$, we replace every other edge $e$ of $C$ with $H_0 - e_0$, where the endpoints of $e_0$ and $e$ are identified. For $G_1$, we replace every other edge $e$ of $C$ with $H_1$, where the endpoints of $e_1$ and $e$ are identified, and in one such copy of $H_1$ we remove the edge $e_1$. It is easy to see that the resulting graphs $G_0$ and $G_1$ have the desired properties. We refer to Fig.~\ref{fig:dual-counterexample} for an illustration.

\bigskip

\noindent
Now, let $k \in \{2,3,4\}$ and $G = (V,E)$ be a planar $(2,k)$-tight graph with a fixed plane embedding. We shall show that the dual $G^*$ of $G$ is $(2,4-k)$-tight. By Euler's formula,
\[
 |E(G^*)| = |E(G)| = 2|V(G)| - k = 2(|E(G^*)| - |V(G^*)| + 2) - k
\]
and thus $|E(G^*)| = 2|V(G^*)| - (4-k)$. So it remains to show that $G^*$ is $(2,4-k)$-sparse. We distinguish the three cases for $k$.

\begin{description}
 \item[Case~1, $k = 4$.] As $G$ is $(2,4)$-sparse, it is triangle-free, and as $|E(G)| = 2|V(G)| - 4$, $G$ is a plane quadrangulation. In particular, all faces of $G$ are quadrangular, which means that $G^*$ is $4$-regular and thus $G^*$ is $(2,0)$-tight.

 \item[Case~2, $k = 3$.] $G$ is $(2,3)$-tight, so by Lemma~\ref{lem:disjoint-trees} it has a decomposition into two edge-disjoint spanning trees after doubling any edge~$e$. A spanning tree $T$ in a plane graph corresponds to a spanning tree in its dual graph $G^*$, formed by the dual edges of all edges not in $T$. Hence, $G^*$ can be decomposed into two spanning trees after deleting any edge. Let $v$ be an arbitrary vertex in $G^*$. We delete some edge $e^*=(u,v)$ in $G^*$, fix two edge-disjoint spanning trees of $G^*\setminus e^*$, and orient each edge towards the root $v$. We end up with a $2$-orientation that meets the requirements of Lemma~\ref{lem:orientations}(\ref{enum:orient-22}), and therefore with a $(2,2)$-tight graph. Adding back $e^*$ results in a $(2,1)$-tight graph.

 We remark that there is an alternative, more geometric proof for this case. As $G$ is a planar $(2,3)$-tight graph, $G$ admits a segment contact representation~\cite{FM07-segment}. Every vertex $v$ of $G^*$, except the vertex $v_o$ for the outer face of $G$, corresponds to a convex polygon $P(v)$ in this representation; see Fig.~\ref{fig:segment-contact-dual}. Now, we add in $G^*$ a loop at $v_o$ and obtain a $2$-orientation of this graph by orienting outgoing for every $v \neq v_o$ the edges corresponding to the leftmost and rightmost point of $P(v)$, and the remaining two edge outgoing at $v_o$.

 \item[Case~3, $k = 2$.] As $G$ is $(2,2)$-tight, by Lemma~\ref{lem:disjoint-trees}, $G$ is the union of two edge-disjoint spanning trees. Since $G^*$ is connected, for every spanning tree $T$ of $G$, the edges of $G^*$ that are not crossed by any edge in $T$ form a spanning tree of $G^*$. Hence, $G^*$ is the edge-disjoint union of two spanning trees. Thus, again by Lemma~\ref{lem:disjoint-trees}, $G^*$ is $(2,2)$-tight.
\end{description}

\section{Contact Representations from Henneberg Moves}
\label{sec:CCA}

Here we prove the existence of CCA-representations for $(2,2)$-sparse and co-Laman graphs,
 the first two cases of Theorem~\ref{thm:CCA}. We defer the degree-4 and $(2,0)$-tight cases
 to Section~\ref{sec:good}.

We begin by describing a set of moves which can be applied to a plane $(2,k)$-tight graph, in order to obtain a larger plane $(2,k)$-tight graph (with more vertices), where $k \in \{0,1,2,3,4\}$ depends on the type of move. Afterwards we show that certain subsets of these moves can be used to generate all plane $(2,k)$-tight graphs of a certain class of graphs, starting from one concrete base graph. All but one of these moves are well-known and have already successfully been used for this purpose; see Fig.~\ref{fig:moves}.

\begin{figure}[t]
  \centering
  \includegraphics[width=\textwidth]{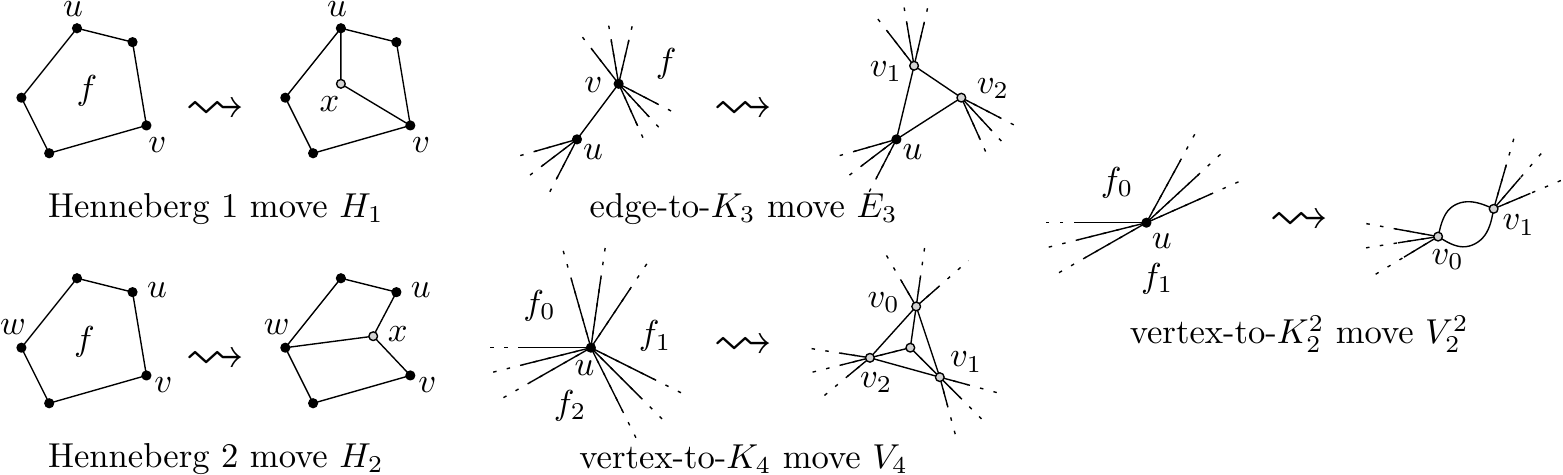}
  \caption{The moves $H_1$, $H_2$, $E_3$, $V_4$ and $V_2^2$.}
  \label{fig:moves}
 \end{figure}

\begin{definition}[Moves]
 Let $G = (V,E)$ be a plane $(2,k)$-graph for some $k \in \{0,1,2,3,4\}$.
 \begin{description}
  \item[The Henneberg~1 move $H_1$.] For a face $f$ of $G$ and two distinct vertices $u,v$ on $f$, introduce a new vertex $x$ inside $f$ and add edges from $x$ to $u$ and $v$.

  \item[The Henneberg~2 move $H_2$.] For a face $f$ of $G$ and an edge $e = (u,v)$ on $f$ and a third vertex $w \neq u,v$ on $f$, introduce a new vertex $x$ inside $f$, add edges from $x$ to $u$, $v$ and $w$, and remove the edge $e$.

  \item[The edge-to-$K_3$ move $E_3$.] For an edge $e = (u,v)$ of $G$ and a face $f$ incident to $v$, replace $v$ by two vertices $v_1,v_2$ connected by an edge $(v_1,v_2)$, and add edges from $v_1$ ($v_2$) to each neighbor of $v$ that lies clockwise (counterclockwise) between $f$ and $e$ (included) around $v$.

  \item[The vertex-to-$K_4$ move $V_4$.] For a vertex $u$ of $G$ and three (not necessarily distinct) faces $f_0,f_1,f_2$ incident to $u$, appearing in that clockwise order around $u$, replace $u$ by a plane $K_4$ with outer vertices $v_0,v_1,v_2$, and add edges from $v_i$ to every neighbor of $u$ that lies clockwise between $f_i$ and $f_{i+1}$ around $u$, $i=0,1,2$, where indices are taken modulo $3$.

  \item[The vertex-to-$K_2^2$ move $V_2^2$.] For a vertex $u$ of $G$ and two (not necessarily distinct) faces $f_0,f_1$ at $u$, replace $u$ by two vertices $v_1,v_2$ connected by two parallel edges, and add edges from $v_i$ to every neighbor of $u$ that lies clockwise between $f_i$ and $f_{i+1}$ around $u$, $i=0,1$, where indices are taken modulo $2$.
 \end{description}
\end{definition}

Henneberg moves $H_1$ and $H_2$ were introduced by Henneberg~\cite{Hen11}, moves $E_3$ and $V_2^2$ were defined by Whiteley~\cite{Whiteley96}, $E_3$ also appears in~\cite{FJW04} under the name \df{vertex-splitting}, and the move $V_4$ was introduced by Nixon and Owen~\cite{Nix11}.

The first part of the following Lemma is due to Henneberg~\cite{Hen11}, see also Lov\'asz and Yemini~\cite{Lov82} and Haas~et al.~\cite{HOR05}.

\begin{lemma}\label{lem:moves-generate}
 Each of the following holds.
~\\[-4ex]
 \begin{enumerate}[\rm (i)]
  \item All plane $(2,3)$-tight graphs can be generated by $H_1$ and $H_2$ moves starting from a triangle~\cite{Hen11,Lov82,HOR05}.\label{enum:generate-23}
  \item All duals of plane $(2,3)$-tight graphs can be generated by $E_3$ and $V_2^2$ moves starting from three parallel edges.\label{enum:generate-23-dual}
  \item All plane $(2,2)$-tight graphs can be generated by $E_3$ and $V_4$ moves starting from an isolated vertex.\label{enum:generate-22}
 \end{enumerate}
\end{lemma}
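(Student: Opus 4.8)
The plan is to treat the three parts in order of increasing novelty. Part~(\ref{enum:generate-23}) is classical and I would simply quote \cite{Hen11,Lov82,HOR05}, recalling only the skeleton so the later parts line up with it: a plane $(2,3)$-tight graph $G$ on at least three vertices has $|E|=2|V|-3$, hence average degree below~$4$ and minimum degree at least~$2$, so it has a vertex $v$ of degree $2$ or~$3$. If $\deg(v)=2$ then $G-v$ is again plane $(2,3)$-tight and $G$ is its $H_1$-image. If every low-degree vertex has degree~$3$, pick such a $v$ with neighbours $a,b,c$; the three candidate new edges $ab,bc,ca$ are exactly those routable through the three faces around $v$, not all three can already be present (that would make $\{v,a,b,c\}$ a $K_4$, which is too dense), and the one nontrivial point --- classical --- is that, possibly after passing to another degree-$3$ vertex, one of the missing edges $xy$ can be chosen so that $(G-v)+xy$ is $(2,3)$-sparse; then $G$ is the $H_2$-image of that smaller plane $(2,3)$-tight graph, and induction finishes.

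For part~(\ref{enum:generate-23-dual}) I would transport~(\ref{enum:generate-23}) through plane duality: since the dual of a triangle is the two-vertex graph with three parallel edges, it suffices to verify that the dual of an $H_1$ move is a $V_2^2$ move and the dual of an $H_2$ move is an $E_3$ move, and then dualise the whole generating sequence. For $H_1$, inserting a vertex $x$ inside a face $f$ and joining it to $u$ and $v$ splits $f$ into two faces sharing exactly the two new edges, so in the dual the vertex $f^*$ is replaced by two vertices joined by a pair of parallel edges, with the remaining edges at $f^*$ distributed into the two rotational intervals cut off by $u$ and $v$ --- precisely a $V_2^2$ move, the two ``faces at $f^*$'' being the duals of $u$ and $v$. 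For $H_2$, the three spokes from $x$ cut $f$ into three regions, the one bounded by the removed edge $e=uv$ being a triangle that, once $e$ is deleted, merges with the face $g$ on the far side of $e$; tracking incidences, $f^*$ is replaced by the two dual vertices of the surviving regions joined by the dual of the spoke $xw$, while the dual edge $e^*=(f^*,g^*)$ is replaced by the duals of the spokes $xu$ and $xv$ --- exactly an $E_3$ move at the edge $e^*$ with distinguished face the dual of $w$. The only real work here is matching up the rotational intervals across the duality.

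Part~(\ref{enum:generate-22}) is the genuinely new statement, and I would prove it by induction on $n=|V|$ using the inverses of $E_3$ and $V_4$. There is no simple $(2,2)$-tight graph on $2$ or $3$ vertices, and the only one on $4$ vertices is $K_4$, the image of the isolated vertex under a single $V_4$ move; that is the base case. For $n\ge 5$, a plane $(2,2)$-tight graph has $|E|=2n-2$, hence $n$ faces, of which at least four are triangular (the $n$ face sizes sum to $4n-4$ and each is at least~$3$), and it has a vertex of degree $2$ or~$3$. An inverse $E_3$ move is the contraction of an edge $xy$ of a triangular face $xyz$, legal when $x$ and $y$ have no common neighbour besides $z$ and the result is $(2,2)$-sparse; an inverse $V_4$ move is the contraction of a plane $K_4$ --- a triangular face $v_0v_1v_2$ with a common degree-$3$ neighbour --- to a single vertex, legal under the analogous conditions. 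Both moves preserve planarity automatically, and since each such contraction decreases the number of edges by exactly twice the decrease in the number of vertices, a $(2,2)$-sparse contraction has the correct edge count and is therefore $(2,2)$-tight --- this I would check directly, or via the two-spanning-tree description of Lemma~\ref{lem:disjoint-trees}, or via the orientations of Lemma~\ref{lem:orientations}.

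The main obstacle is the remaining step of part~(\ref{enum:generate-22}): showing that for $n\ge 5$ some legal inverse move always exists. Starting from a vertex $v$ of degree $2$ or $3$ one inspects the triangular faces at $v$ (and, if $\deg(v)=3$ and $v$ lies in a $K_4$, that $K_4$), and whenever a candidate contraction is blocked --- because it would create a multiple edge, which forces a small $(2,2)$-tight subgraph onto the offending vertices into which one can descend, or because re-identifying an edge would over-saturate some $(2,2)$-tight vertex set --- one argues that a different incident triangle, or one of the at-least-four triangular faces, yields a legal move instead. Organising this ``blocked here, legal there'' dichotomy while simultaneously controlling sparsity and edge multiplicity is where essentially all the difficulty of the lemma lies; parts~(\ref{enum:generate-23}) and~(\ref{enum:generate-23-dual}) are comparatively routine, the former being quoted and the latter a mechanical, if fiddly, dualisation.
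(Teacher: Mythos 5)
Parts~(i) and~(ii) of your proposal are fine and coincide with the paper's treatment: (i) is quoted from the literature, and (ii) is obtained by dualising the $H_1/H_2$ generating sequence, with $H_1^*=V_2^2$ and $H_2^*=E_3$; your face-tracking for the $H_2$ case is the same observation the paper makes, just spelled out.

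Part~(iii), however, has a genuine gap, and it is exactly the step you flag as ``the main obstacle'': you never establish that a plane $(2,2)$-tight graph on $n\ge 5$ vertices admits \emph{some} legal inverse $E_3$ or inverse $V_4$ move. That existence claim is essentially the whole content of the lemma, and the ``blocked here, legal there'' dichotomy you sketch does not obviously close: a contraction of a triangle edge $xy$ can be blocked not just by a parallel-edge conflict on a small vertex set but by an arbitrary proper tight subgraph containing $x$ and $y$, and these blockers can be large and can simultaneously obstruct every triangle incident to your chosen low-degree vertex. (There are also local dead ends your search does not cover, e.g.\ a degree-$2$ vertex whose two neighbours are non-adjacent lies in no triangle and in no $K_4$, so neither inverse move applies at it.) The paper does not attempt a local search at all; it proves (iii) by a global dichotomy. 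If $G$ is Laman-plus-one, it invokes Lemma~\ref{lem:lam+1} (itself a substantial induction resting on the Fekete--Jord\'an--Whiteley contractibility machinery, Lemmas~\ref{lem:critical} and~\ref{lem:blocker}, which is formulated for $(2,3)$-tight graphs and is why one must first strip the extra edge), so $G$ is a single $V_4$ move followed by $E_3$ moves. Otherwise Nixon's structure theorem (Lemma~\ref{lem:2-2-struct}) supplies a proper $(2,2)$-tight subgraph $H$ attached to the rest by at most one edge per outside vertex; the paper then splits $G$ into $H'$ (the part of $G$ inside $H$'s outer boundary, tight by Lemma~\ref{lem:tight-component}) and the quotient $G/H'$, applies induction to both, and concatenates the two move sequences. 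To repair your proof you would need either to import these two structure results or to prove from scratch that a legal contraction always exists --- which is not a routine verification.
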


In order to prove (\ref{enum:generate-22}) we need one more concept from the literature. A \df{Laman-plus-one} graph is a simple graph $G$ with an edge $e = uv$, such that $\deg(u)\geq 2$ and $\deg(v) \geq 2$, and $G-e$ is a $(2,3)$-tight graph (Laman graph). Laman-plus-one graphs form a proper subclass of $(2,2)$-tight graphs. Fekete~et~al.~\cite{FJW04} had the following claim without a proof about generating Laman-plus-one graphs by $E_3$ moves starting from $K_4$.
 For the sake of completeness here we give a proof for this claim.

\wormhole{lem:lam11}
\begin{lemma}[Fekete~et al.~\cite{FJW04}]
\label{lem:lam+1}
Every plane Laman-plus-one graph can be generated by $E_3$ moves starting from $K_4$.
\end{lemma}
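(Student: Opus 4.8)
The plan is to induct on the number of vertices, reversing $E_3$ moves (vertex splits) until we reach $K_4$. Let $G$ be a plane Laman-plus-one graph with distinguished edge $e = uv$ such that $G - e$ is Laman, $\deg_G(u), \deg_G(v) \ge 2$. If $|V(G)| = 4$ then $G - e$ is a Laman graph on four vertices, hence a triangle plus a degree-2 vertex; adding back $e$ and using $\deg(u), \deg(v) \ge 2$ forces $G = K_4$, giving the base case. For the inductive step, it suffices to exhibit an edge $xy$ of $G$ and a way to ``unsplit'' it — i.e. contract $xy$ and redistribute the incident edges — so that the result is a smaller plane Laman-plus-one graph from which $xy$ could be recovered by an $E_3$ move. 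The natural strategy is to run the standard reverse-Henneberg argument on the Laman graph $G' := G - e$, which by Lemma~\ref{lem:moves-generate}(i) can be built from a triangle by $H_1$ and $H_2$ moves, and then argue that the last such move can be converted into an $E_3$ move on $G$ that respects the distinguished edge $e$.

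Concretely, I would proceed as follows. First recall the classical fact (inside the proof of Lemma~\ref{lem:moves-generate}(i)) that in any Laman graph there is a vertex of degree $2$ or $3$; I want to locate such a low-degree vertex $w$ in $G' = G-e$ that is moreover \emph{not} an endpoint of $e$, so that the reverse move on $G'$ does not disturb $e$. If $G'$ has a degree-2 vertex $w \notin \{u,v\}$, then $w$ has the same degree in $G$; deleting $w$ from $G$ is a reverse $H_1$ on $G'$, but as a move on $G$ it is a reverse $E_3$ (the split of an edge where one side of the split has no ``extra'' neighbours — equivalently, think of $w$ and one of its neighbours as the two vertices $v_1, v_2$ created by the split). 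The resulting graph $G - w$ still contains $e$ with both endpoints of degree $\ge 2$ and $(G-w) - e$ is Laman, so induction applies. If instead every low-degree vertex has degree $3$, pick such a vertex $w \notin \{u,v\}$: a reverse $H_2$ on $G'$ deletes $w$ and adds one edge among its three neighbours so as to keep the graph Laman; I must check this added edge can be chosen inside a face of $G$ (planarity) and that, reinterpreted on $G$, this is exactly a reverse $E_3$ move (splitting off $w$ together with the chosen neighbour pair). Again $e$ survives with its degree conditions, closing the induction.

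The main obstacle — and the part needing real care — is guaranteeing that we can always find the low-degree vertex \emph{away from} $\{u,v\}$, and that the reverse move we perform keeps the graph in the Laman-plus-one class with the \emph{same} distinguished edge $e$. If $G'$ had all of its degree-2/degree-3 vertices among $\{u,v\}$, we would be forced to unsplit at $u$ or $v$, which risks destroying $e$ or violating $\deg \ge 2$ at the surviving endpoint. To handle this I would argue by a counting/averaging estimate: $G' = G - e$ has $2n-3$ edges on $n$ vertices, so the average degree is just under $4$, and since $G$ is simple with $n \ge 5$ there are enough vertices of degree $\le 3$ that at least one avoids the two-element set $\{u,v\}$ — or, if not, a short case analysis on the local structure at $u$ and $v$ (using $\deg_G(u),\deg_G(v)\ge 2$, hence $\deg_{G'}(u),\deg_{G'}(v)\ge 1$, and the fact that a Laman graph is $2$-connected after we discount pendant situations) lets us still peel off a vertex by an $E_3$-reverse. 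A second, more technical point is the planarity bookkeeping in the $H_2$ case: the edge inserted among $w$'s three neighbours must lie in a face created by deleting $w$, and one must verify this is compatible with being the image of an $E_3$ split; this is routine but is where the plane (as opposed to abstract) hypothesis is genuinely used.
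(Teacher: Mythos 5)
Your overall plan (downward induction, peeling off one vertex at a time until $K_4$ remains) matches the paper's in spirit, but the engine you propose for the inductive step does not work: reverse $H_1$ and $H_2$ moves on $G-e$ are not, in general, reverse $E_3$ moves on $G$, and this identification is the load-bearing claim of your argument. An $E_3$ move always creates a triangle $uv_1v_2$, so its reverse is the \emph{contraction} of an edge lying in a triangle (whose contraction preserves $(2,3)$-tightness). Deleting a degree-$2$ vertex $w$ is a reverse $E_3$ only when the two neighbours of $w$ are adjacent, which need not hold. Worse, a reverse $H_2$ at a degree-$3$ vertex $w$ deletes $w$ and inserts a fresh edge among its three neighbours, which is a different operation from contracting an edge at $w$; the two coincide only when $w$ lies in a triangle and the inserted edge is the one joining $w$'s remaining neighbour to the vertex into which $w$ is merged. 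Laman graphs can have degree-$3$ vertices contained in no triangle at all (triangle-free Laman graphs exist), in which case no reverse $E_3$ move is available at $w$ regardless of how the $H_2$ edge is chosen. So your degree-counting locates a vertex where a Henneberg reverse move is possible, but that is the wrong currency: the lemma asks for generation by $E_3$ moves only, and those correspond to edge contractions, not vertex deletions.

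The paper's proof works directly in this contraction framework: it seeks a \emph{contractible} edge $e'=(x,y)$ of $G-e$ with $u,v\notin\{x,y\}$, so that $((G-e)/e')\cup e$ is a smaller plane Laman-plus-one graph from which $G$ is recovered by one $E_3$ move. The existence of such an edge is established not by degree counting but by the triangle/blocker machinery of Fekete et al.: if an edge of a triangle is non-contractible, Lemma~\ref{lem:blocker} yields a proper $(2,3)$-tight subgraph containing it together with a contractible edge, and Lemma~\ref{lem:critical} forces the three blockers of a triangle to pairwise intersect in single vertices, so at least one of them avoids both $u$ and $v$; the residual situation in which every triangle meets $\{u,v\}$ and the only candidate contractible edge $(x,y)$ makes $u,v,x,y$ induce a $K_4$ is handled separately by Lemma~\ref{lem:lam+1-K4}. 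To repair your argument you would essentially have to reprove this: replace ``low-degree vertex away from $\{u,v\}$'' by ``contractible edge away from $\{u,v\}$'' and supply the blocker analysis.
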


In order to prove Lemma~\ref{lem:lam+1}, we need the following three auxiliary lemmas.
 The first of these lemmas is due to~\cite{FJW04} and can be easily proved by counting.

\begin{lemma}
\label{lem:critical} Let $G$ be a $(2,3)$-tight graph and let $H_1$ and $H_2$ be two
 subgraphs of $G$ which are both $(2,3)$-tight such that $|V(H_1)\cap V(H_2)|\ge 2$.
 Then $H_1\cup H_2$ is also $(2,3)$-tight.
\end{lemma}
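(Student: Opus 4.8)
The plan is a direct counting argument using the sparsity and tightness of the three subgraphs $H_1$, $H_2$, and $H_1 \cup H_2$, combined with inclusion–exclusion on the edge and vertex sets. First I would set up notation: write $n_i = |V(H_i)|$, $m_i = |E(H_i)|$ for $i=1,2$, and let $n_\cap = |V(H_1) \cap V(H_2)|$, $m_\cap = |E(H_1) \cap E(H_2)|$, and $n_\cup, m_\cup$ the corresponding quantities for $H_1 \cup H_2$. Since $H_1$ and $H_2$ are $(2,3)$-tight we have $m_i = 2n_i - 3$. By inclusion–exclusion, $n_\cup = n_1 + n_2 - n_\cap$ and $m_\cup = m_1 + m_2 - m_\cap$, so
\[
 m_\cup = (2n_1 - 3) + (2n_2 - 3) - m_\cap = 2(n_\cup + n_\cap) - 6 - m_\cap = 2n_\cup - 3 + (2n_\cap - 3 - m_\cap).
\]

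The key observation is then that the intersection graph $H_1 \cap H_2$ (on vertex set $V(H_1)\cap V(H_2)$ with edge set $E(H_1)\cap E(H_2)$) is a subgraph of the $(2,3)$-sparse graph $G$, hence is itself $(2,3)$-sparse; combined with the hypothesis $n_\cap \ge 2$, this gives $m_\cap \le 2n_\cap - 3$, i.e. $2n_\cap - 3 - m_\cap \ge 0$. Plugging this into the displayed identity yields $m_\cup \ge 2n_\cup - 3$. On the other hand $H_1 \cup H_2$ is a subgraph of $G$ and therefore $(2,3)$-sparse, so $m_\cup \le 2n_\cup - 3$. The two inequalities force $m_\cup = 2n_\cup - 3$, and since $H_1 \cup H_2$ is $(2,3)$-sparse with exactly $2n_\cup - 3$ edges it is $(2,3)$-tight, as claimed.

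The only subtle point — and the place where the hypothesis $|V(H_1)\cap V(H_2)| \ge 2$ is essential — is the step $m_\cap \le 2n_\cap - 3$. For this I need $H_1 \cap H_2$ to be a genuine subgraph of $G$ on at least $2$ vertices so that the bound $|E[W]| \le \max\{2|W| - 3, |W|-1\}$ from $(2,3)$-sparseness applies and, for $|W| = n_\cap \ge 2$, actually equals $2n_\cap - 3$ in the relevant regime; if $n_\cap \le 1$ the argument collapses since then there is no edge-subgraph constraint forcing the correction term to be nonnegative. I expect this is a routine check rather than a real obstacle: one just verifies that $E(H_1)\cap E(H_2)$ consists of edges of $G$ whose endpoints lie in $V(H_1)\cap V(H_2)$, so it is induced-or-spanned by a vertex set of size $n_\cap$, and invokes sparsity of $G$ directly. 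I would also note explicitly that $H_1\cup H_2$ has no isolated structure issues (it is connected enough that the $\max$ in the sparsity definition is realized by the $2|W|-3$ branch), which again follows since both $H_i$ are themselves $(2,3)$-tight and hence connected.
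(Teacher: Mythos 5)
Your proof is correct, and it is precisely the counting argument the paper has in mind: the paper gives no proof of Lemma~\ref{lem:critical}, attributing it to Fekete et al.\ and noting only that it ``can be easily proved by counting,'' which is exactly your inclusion--exclusion/submodularity computation with the hypothesis $|V(H_1)\cap V(H_2)|\ge 2$ used to bound the intersection by $2n_\cap-3$. (One tiny quibble: the final remark about connectivity of the $H_i$ is unnecessary --- the $2|W|-3$ branch of the $\max$ dominates purely because $|W|\ge 2$.)
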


 For an edge $e$ of $G$, denote by $G/e$ the graph obtained from $G$ after contracting
 edge $e$. If $G$ is a plane graph, then the planar embedding of $G$ also defines a
 planar embedding for $G/e$. Let $G$ be plane $(2,3)$-tight graph and let $e$ be an edge
 of $G$. Then $e$ is called \df{contractible} if the graph $G/e$ is also a plane $(2,3)$-tight
 graph. One can see that if $G$ has a contractible edge $e$, then $G$ can be generated from
 $G/e$ by an $E_3$ move. The following lemma is due to~\cite{FJW04}.

\begin{lemma}[Fekete~et al.~\cite{FJW04}]
\label{lem:blocker}
Let $G$ be a $(2,3)$-tight graph with $|V(G)|\ge 4$ and let $uvw$ be a triangle of $G$
 such that the edge $uv$ is not contractible. Then there is a proper $(2,3)$-tight subgraph
 $H$ of $G$ such that $u,v\in V(H)$ and $w\notin V(H)$. Furthermore, there is some edge
 $e'\in E(H)$ which is contractible in $G$.
\end{lemma}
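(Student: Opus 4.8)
\emph{Proof plan.} I would prove the two assertions separately: the first by a direct count, the ``furthermore'' by induction on $|V(G)|$, passing to an inclusion-maximal blocker via Lemma~\ref{lem:critical}. If $u$ and $v$ had a common neighbour $w'\neq w$, then $\{u,v,w'\}$ already spans $3=2\cdot3-3$ edges while avoiding $w$, so $G[\{u,v,w'\}]$ is the desired $H$; hence assume $w$ is the unique common neighbour of $u$ and $v$. Then $G/uv$ has exactly $|E(G)|-2=2\,|V(G/uv)|-3$ edges, so ``$uv$ not contractible'' is equivalent to ``$G/uv$ is not $(2,3)$-sparse''. Pick $W'\subseteq V(G/uv)$ with $|E_{G/uv}[W']|\ge 2|W'|-2$; since $G$ is $(2,3)$-sparse, $W'$ must contain the contracted vertex $x$, so $|W'|\ge 2$. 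Put $W=(W'\setminus\{x\})\cup\{u,v\}$, so $|W|=|W'|+1\ge 3$. If $w\in W$ the lift gains two edges (the edge $uv$ and the splitting of $xw$ into $uw,vw$), so $|E_G[W]|\ge 2|W|-2$, contradicting $(2,3)$-sparsity of $G$; hence $w\notin W$, the lift gains only $uv$, and $|E_G[W]|\ge 2|W|-3$. Sparsity of $G$ forces equality, so $H:=G[W]$ is $(2,3)$-tight, and it is proper since $W=V(G)$ would force $|E_{G/uv}[W']|=2|W'|-3$. This proves the first assertion.

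For the ``furthermore'', Lemma~\ref{lem:critical} shows that the $(2,3)$-tight subgraphs of $G$ containing $\{u,v\}$ and avoiding $w$ are closed under union (any two share $u$ and $v$), so there is a unique inclusion-maximal one; take $H$ to be it. By the first assertion $|V(H)|\ge 3$, and since $H$ is a planar $(2,3)$-tight graph on at least three vertices it contains a triangle (a triangle-free planar graph has at most $2|V|-4<2|V|-3$ edges). I would then record two facts. \textbf{(a)} $w$ has at most two neighbours in $V(H)$, for otherwise $H$ together with $w$ and three of its edges into $V(H)$ would be a subgraph of $G$ on $|V(H)|+1$ vertices with at least $2(|V(H)|+1)-2$ edges, violating $(2,3)$-sparsity. \textbf{(b)} If $e'\in E(H)$ is contractible in $H$ but not in $G$, then any $(2,3)$-tight subgraph on $\ge 3$ vertices witnessing non-contractibility of $e'$ in $G$ (containing the endpoints of $e'$ and avoiding their common neighbours) is not contained in $H$ -- otherwise $e'$ would already fail to be contractible in $H$ -- so its union with $H$ is a strictly larger $(2,3)$-tight subgraph which, by maximality of $H$, must contain $w$.

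The induction on $|V(G)|$ then proceeds as follows. In the base case $|V(G)|=4$ we have $G=K_4-e$, where the non-contractible edge $uv$ has the opposite triangle $uvt$ as its only blocker and one of $ut,vt$ turns out to be contractible in $G$ (contracting it leaves three vertices spanning three edges). For the inductive step, pick a triangle $abc$ of $H$. If one of $ab,bc,ca$ is contractible in $G$ we are done; otherwise fact (a) forces $w$ to be a common neighbour of at most one of the pairs $\{a,b\},\{b,c\},\{c,a\}$, so at least two of those edges -- say $ab$ -- have a genuine $G$-witness $B$, and by fact (b) the graph $B\cup H$ is $(2,3)$-tight, strictly larger than $H$, and contains $w$; being induced, it also contains the triangle $uvw$. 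The plan is then to rerun the first assertion inside $B\cup H$ (where $uv$ is still non-contractible and whose maximal blocker for $uv$ is again contained in $H$), or alternatively to recurse the whole statement on the smaller graph $H$ to extract an edge contractible in $H$ and then upgrade it via facts (a) and (b). Either way, the crux -- and the step I expect to be the main obstacle -- is pinning down exactly how the inner witness $B$, the maximal blocker $H$, the triangle $uvw$ and the degree bound on $w$ interact, so as to contradict the maximality of $H$; the remainder is routine $(2,3)$-tight counting, possibly aided by the auxiliary fact that every planar $(2,3)$-tight graph on at least four vertices has a contractible edge.
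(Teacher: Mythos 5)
The paper does not actually prove Lemma~\ref{lem:blocker}: it imports it from Fekete, Jord\'an and Whiteley~\cite{FJW04}, so there is no in-paper argument to measure yours against. Judged on its own merits, your proof of the first assertion is correct and complete. The reduction to the case where $w$ is the unique common neighbour of $u$ and $v$ (a second common neighbour $w'$ immediately yields the triangle $uvw'$ as the blocker), the observation that $G/uv$ then has exactly $2|V(G/uv)|-3$ edges so that non-contractibility is equivalent to a violation of $(2,3)$-sparsity, and the lifting of a violating set $W'$ to $W=(W'\setminus\{x\})\cup\{u,v\}$ --- gaining one edge if $w\notin W$ and two if $w\in W$ --- is exactly the right counting argument; it correctly forces $w\notin W$ and shows $G[W]$ is a proper induced $(2,3)$-tight blocker.

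The ``furthermore'' clause, however, is not proved, and this is the substantive half of the lemma (it is what drives the induction in the proof of Lemma~\ref{lem:lam+1}). Your preparation is sound: the maximal blocker $H$ is well defined and induced by Lemma~\ref{lem:critical}; fact (a) is a correct count (indeed the only neighbours of $w$ in $H$ are $u$ and $v$); fact (b) correctly shows that if $e'\in E(H)$ is contractible in $H$ but not in $G$, any $G$-witness $B$ gives a strictly larger tight subgraph $B\cup H$ containing $w$. But the argument stops exactly where the work begins. Assuming every edge of a triangle $abc$ of $H$ is non-contractible in $G$, you obtain enlarged tight subgraphs $B\cup H\ni w$ for at least two of the three edges, yet no contradiction is extracted from this: it is never shown that these enlargements are incompatible with $w$ having only two neighbours in any tight subgraph containing $H$, nor is the proposed induction set up with a decreasing quantity and a hypothesis applied to an identified smaller instance. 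You flag this yourself as ``the main obstacle,'' so as written the proposal establishes only the first sentence of the lemma. A concrete way to close the gap is the device the paper itself uses when it \emph{applies} Lemma~\ref{lem:blocker} inside the proof of Lemma~\ref{lem:lam+1}: take blockers for all three edges of a triangle of $H$ and use Lemma~\ref{lem:critical} to force their pairwise intersections to be single vertices, which bounds how many of them can be ``blocked through $w$'' and isolates one blocker in which your first assertion can be reapplied on a strictly smaller vertex set.
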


We also have the following lemma.

\begin{lemma}
\label{lem:lam+1-K4} Let $G$ be a plane graph with a $K_4$ as a subgraph $H$ formed by four
 vertices $u,v,x,y$. If $G-(u,v)$ is a $(2,3)$-tight graph, then $G$ can be generated by $E_3$
 moves starting from $H$.
\end{lemma}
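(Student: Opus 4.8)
I would prove Lemma~\ref{lem:lam+1-K4} by induction on $n := |V(G)|$. For $n = 4$ the hypothesis forces $G' := G - (u,v)$ to be $(2,3)$-tight on the four vertices $u,v,x,y$, hence $G' = K_4 - (u,v)$ and $G = H = K_4$, so there is nothing to prove. For $n \ge 5$ the whole argument reduces to producing one \emph{contractible} edge $e'$ of $G'$ (i.e.\ one whose contraction, with the resulting parallel edge deleted, is again $(2,3)$-tight) that has at most one endpoint in $V(H) = \{u,v,x,y\}$. Indeed, given such an $e'$, contracting it identifies at most one vertex of $H$ with a vertex outside $H$, so all six edges of $H$ survive and $G/e'$ still contains a $K_4$ on the (relabelled) vertex set $V(H)$, including the distinguished edge $(u,v)$; moreover $(u,v) \notin E(G')$ (otherwise $\{u,v,x,y\}$ would induce six edges in the $(2,3)$-sparse graph $G'$), so $e' \ne (u,v)$, the contraction does not merge $u$ with $v$, and $(G/e') - (u,v)$ is exactly the $(2,3)$-tight graph obtained from $G'$ by contracting $e'$. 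Hence $G/e'$ lies in the class of the lemma and has one fewer vertex, so by induction it is generated from a $K_4$ by $E_3$ moves; undoing the contraction of $e'$ is one more $E_3$ move (using the fact recorded before Lemma~\ref{lem:blocker} that an edge of a graph is created from its contraction by an $E_3$ move, after a routine check that the extra edge $(u,v)$ can be placed so as not to interfere), giving $G$.

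To find $e'$, I first observe that the only contractible edges of $G'$ with both endpoints in $V(H)$ are among $ux, uy, vx, vy$: we have $(u,v) \notin E(G')$, and $xy$ lies in the two triangles $uxy$ and $vxy$ of $G'$, so $xy$ is in more than one triangle and is not contractible. So it suffices to exhibit a contractible edge of $G'$ outside $\{ux,uy,vx,vy\}$. For this I would invoke Lemma~\ref{lem:blocker}: applied to $G'$ with the triangle $uxy$ and its non-contractible edge $xy$, it produces a proper $(2,3)$-tight subgraph $K_1$ with $x,y \in V(K_1)$, $u \notin V(K_1)$, containing an edge $e_1$ contractible in $G'$; symmetrically, the triangle $vxy$ produces a proper $(2,3)$-tight subgraph $K_2$ with $x,y \in V(K_2)$, $v \notin V(K_2)$, and a contractible edge $e_2 \in E(K_2)$. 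Since $u \notin V(K_1)$ and $xy$ is not contractible, $e_1 \notin \{ux,uy,xy\}$, so either $e_1 \notin \{vx,vy\}$ and we are done, or $e_1 \in \{vx,vy\}$ (and then $v \in V(K_1)$); symmetrically for $e_2$ relative to $\{ux,uy\}$.

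This leaves the case $e_1 \in \{vx,vy\}$, $e_2 \in \{ux,uy\}$, in which $v \in V(K_1)$ and $u \in V(K_2)$, so $|V(K_1) \cap V(K_2)| \ge |\{x,y\}| = 2$. By Lemma~\ref{lem:critical} the union $K := K_1 \cup K_2$ is $(2,3)$-tight; being $(2,3)$-tight it is an induced subgraph of $G'$ that contains $V(H)$ together with all edges of $H$ except $(u,v)$, and by submodularity $K_1 \cap K_2$ is $(2,3)$-tight as well. If $K \subsetneq G'$, then $K + (u,v)$ belongs to the class of the lemma and has fewer vertices than $G$, hence is generated from a $K_4$ by $E_3$ moves by induction; one then checks that $G$ is obtained from $K + (u,v)$ by further $E_3$ moves, by repeatedly contracting a contractible edge of $G'$ with an endpoint outside $V(K)$ (each such edge is automatically good and leaves $V(H)$ inside the contracted vertex set), until $K + (u,v)$ is reached. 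If instead $K = G'$, the contractibility of $e_1$ and $e_2$ forces the relevant common neighbourhoods among $\{u,v,x,y\}$ to be singletons, which together with $K_1 \cup K_2 = G'$, the count on $K_1 \cap K_2$, and $(2,3)$-tightness of $G'$ on $n \ge 5$ vertices yields a contradiction.

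The step I expect to be the main obstacle is exactly this final case analysis: proving in full generality that some contractible edge of $G'$ must lie outside the copy of $K_4$ on $\{u,v,x,y\}$ --- equivalently, that the contractible edges of $G'$ cannot all be ``pinned'' inside $H$. Everything else (the base case, the reduction via edge contraction, and identifying the reverse move with an $E_3$ move) is routine once that is secured.
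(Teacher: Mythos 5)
Your overall strategy (induction via contraction of a contractible edge of $G'=G-(u,v)$, i.e.\ reverse $E_3$ moves) is genuinely different from the paper's proof, which instead decomposes $G$ along the four triangular faces of $H$, shows via Lemma~\ref{lem:tight-component} and a counting argument that each subgraph $G[V(f)\cup U_f]$ is $(2,3)$-tight, and then invokes the known fact that every plane $(2,3)$-tight graph is generated from a triangle by $E_3$ moves. Unfortunately, the step you yourself flag as the main obstacle is a genuine gap, and in fact the claim you need there is false. Take $G$ to be the $K_4$ on $\{u,v,x,y\}$ together with one extra vertex $z$ adjacent to $u$ and $v$ (embedded with $z$ inside the face $uvx$). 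Then $G-(u,v)$ is $(2,3)$-tight, so $G$ satisfies the hypotheses of the lemma with $n=5$. But in $G'=G-(u,v)$ the edges $uz$ and $vz$ lie in no triangle and are therefore not contractible, $xy$ lies in two triangles, and the only contractible edges of $G'$ are $ux,uy,vx,vy$ --- all four with both endpoints in $V(H)$. So there is no contractible edge of $G'$ with at most one endpoint in $V(H)$, and your main induction step cannot fire. Your fallback also fails on this example: here $K_1$ and $K_2$ are the triangles $vxy$ and $uxy$, so $K=K_1\cup K_2=H-(u,v)\subsetneq G'$, and the plan of ``repeatedly contracting a contractible edge of $G'$ with an endpoint outside $V(K)$'' finds nothing to contract, since $uz$ and $vz$ are the only such edges and neither is contractible in $G'$.

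The example also shows what goes wrong conceptually: $G$ \emph{is} obtained from $K_4$ by a single $E_3$ move (split $u$ into $u$ and $z$), but the edge being un-contracted, $uz$, is contractible in $G$ while \emph{not} contractible in $G'$ --- it acquires its unique triangle $uvz$ only once $(u,v)$ is present. Your whole reduction is phrased in terms of contractibility in $G'$ (so that $(G/e')-(u,v)$ is again $(2,3)$-tight and induction applies), and that is exactly the wrong graph to look at near $u$ and $v$. To repair the argument along your lines you would have to work with edges contractible in $G$ itself and control how contraction interacts with the distinguished edge $(u,v)$ and with the $K_4$; this is essentially the delicacy that the paper's proof of Lemma~\ref{lem:lam+1} wrestles with, and why for Lemma~\ref{lem:lam+1-K4} the paper switches to the face-decomposition argument, which sidesteps the search for a contractible edge entirely. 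The early parts of your proposal (the base case, the non-contractibility of $xy$, the use of Lemmas~\ref{lem:blocker} and~\ref{lem:critical}) are fine, but as written the proof does not go through.
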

\begin{proof} Consider the four triangle faces of $H$. For each of these faces $f$, denote
 $U_f$ the set of vertices inside the face $f$. We first claim that for each of these four faces
 $f$, the subgraph of $G$ induced by the vertices $V(f)\cup U_f$ is a $(2,3)$-tight graph;
 where $V(f)$ denotes the vertices on the cycle $f$. Indeed, if $f$ does not contain the edge
 $(u,v)$ (assume $f=vxy$ since the case when $f=uxy$ is similar), then the claim follows
 from Lemma~\ref{lem:tight-component}, since both $G-(u,v)$ and $f=vxy$ are $(2,3)$-tight;
 see Fig.~\ref{fig:lam+1-K4}(a).

\begin{figure}[h]
\centering
\includegraphics[width=0.7\textwidth]{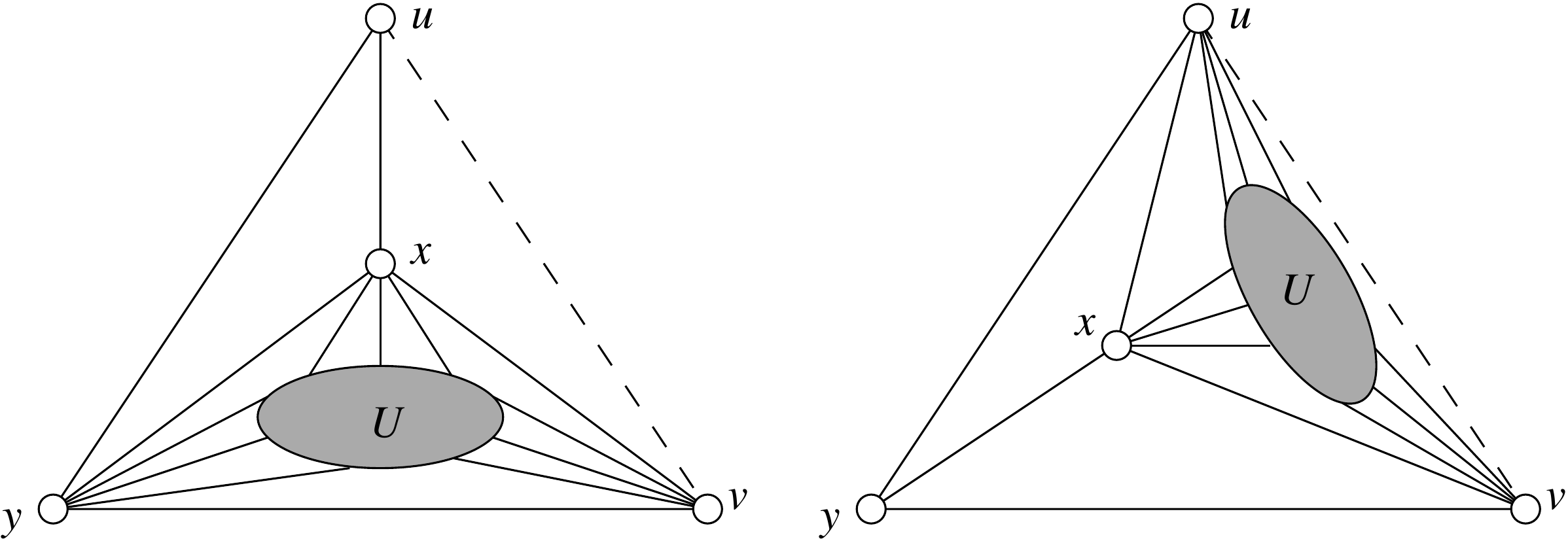}\\
(a)\hspace{0.35\textwidth}(b)
\caption{Illustration for the proof of Lemma~\ref{lem:lam+1-K4}.}
\label{fig:lam+1-K4}
\end{figure}

 Otherwise assume that $f=uvx$ (the case when $f=uvy$ is similar); see
 Fig.~\ref{fig:lam+1-K4}(b). Denote by $X$ the
 subgraph induced by $V(f)\cup U_f$. Due to Lemma~\ref{lem:tight-component}, the subgraph
 of $G-(u,v)$ induced by $U_f\cup\{u,v,x,y\}$ is $(2,3)$-tight. Thus the number of edges
 in $X$ is $2|X|-3$. Furthermore $X-(u,v)$ is $(2,3)$-sparse. Therefore if $X$ is not
 $(2,3)$-tight, then there is subgraph $Y$ of $X$ containing $(u,v)$ such that
 $|E(Y)|\ge 2|V(Y)|-3$.
 If $x$ is not in $Y$, then add the vertex $x$ and the edges $(u,x)$ and $(v,x)$ to it.
 Finally delete from $Y$ the edge $(u,v)$ and add to it the vertex $y$ and the edges $(u,y)$,
 $(v,y)$ and $(x,y)$. Then this graph is not $(2,3)$-sparse but is a subgraph of $G-(u,v)$,
 a contradiction.

We thus assume that for each face $f$ of $H$, the subgraph induced by $V(f)\cup U_f$ is plane
 $(2,3)$-tight. Therefore this graph can be obtained by $E_3$ moves starting with the triangle
 $f$~\cite{FJW04}. Thus using these moves, one can generate $G$ starting from $H$.
\end{proof}

We are now ready to prove Lemma~\ref{lem:lam+1}.

\textit{Proof for Lemma~\ref{lem:lam+1}.}
 Assume for a contradiction that there is a plane Laman-plus-one graph other than $K_4$,
 which cannot be generated by $E_3$ moves starting from $K_4$. Let $G$ be such a graph
 with the minimum number of vertices. Since $G$ is a Laman-plus-one graph, there is an
 edge $e=(u,v)$ of $G$ such that $G-e$ is a Laman ($(2,3)$-tight) graph. We show that
 there is an edge $e'=(x,y)$ such that $u,v\notin\{x,y\}$ which is contractible in $G-e$.
 Thus $H=((G-e)/e')\cup e$ is a Laman-plus-one graph with fewer vertices than $G$ and
 $G$ can be obtained from $H$ by an $E_3$ move. By the minimality of $G$, $H$ (and
 hence $G$) can be generated from $K_4$ by a sequence of $E_3$ moves, a contradiction.

Suppose that there is a triangle $T=xyz$ in $G-e$ such that $u,v\notin\{x,y,z\}$. If
 any edge of $T$ is contractible, then we are done. Otherwise by Lemma~\ref{lem:blocker}
 there exists proper $(2,3)$-tight subgraphs $H_1$, $H_2$, $H_3$ of $G-e$ such that
 $x,y\in V(H_1)$, $z\notin V(H_1)$; $y,z\in V(H_2)$, $x\notin V(H_2)$; and $z,x\in V(H_3)$,
 $y\notin V(H_3)$.  Furthermore for each $i=1,2,3$, there is an edge $e_i\in E(H_i)$ which
 is contractible in $G-e$. By Lemma~\ref{lem:critical}, $V(H_1)\cap V(H_2)=\{y\}$,
 $V(H_2)\cap V(H_3)=\{z\}$, and $V(H_3)\cap V(H_1)=\{x\}$. Therefore there is at least one
 $i\in\{1,2,3\}$, that contains neither $u$ nor $v$. Thus we have a desired contractile edge
 $e_i$ in $G-e$.

We thus assume that each triangle in $G$ contains either $u$ or $v$. Note that no triangle can
 contain both $u$ and $v$ since $G-e$ does not contain the edge $(u,v)$. Since $G-e$ is
 a plane $(2,3)$-tight graph, it contains at least two triangular faces. Let $uxy$ be such a
 triangular face (the case when we consider a triangle containing $v$ is similar). If the edge
 $(x,y)$ is contractible, we are done. Otherwise by Lemma~\ref{lem:blocker}, there is a
 proper $(2,3)$-tight subgraph $H$ of $G$ such that $x,y\in V(H)$ and $u\notin V(H)$.
 If $|V(H)|\ge 4$, then by~\cite{FJW04} there is a contractible edge $e'$ in $H$ that does not
 have $v$ as an endpoint. If $e'$ is not $(x,y)$, then this gives our desired contractible edge.
 Otherwise since each triangle contains either $u$ or $v$, the four vertices $u,v,x,y$ form a
 $K_4$ in $G$. Thus by Lemma~\ref{lem:lam+1-K4}, we can generate $G$ by $E_3$ moves,
 starting from $K_4$, also a contradiction. \qed

Nixon proved the following for $(2,2)$-tight graphs that are not Laman-plus-one~\cite{Nixon-phd}.

\begin{lemma}[Nixon~\cite{Nixon-phd}]
\label{lem:2-2-struct}
 Let $G$ be a $(2,2)$-tight graph with at least one edge. Then either $G$ is a Laman-plus-one graph or there exists a proper $(2,2)$-tight subgraph $H$ of $G$ such that no vertex of $G-H$ is adjacent to more than one vertex in $H$.
\end{lemma}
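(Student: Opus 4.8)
The plan is to argue by induction on $n := |V(G)|$; throughout I may assume $G$ is simple, since this is the setting in which the lemma is used, and the base of the induction is $K_4$ (the smallest simple $(2,2)$-tight graph), which is Laman-plus-one. The central notion is a \emph{tight set}: a subset $W\subseteq V(G)$ with $|E(G[W])| = 2|W|-2$, the maximum number of edges $(2,2)$-sparseness allows inside $W$, attained by $V(G)$ itself. First I would record two routine facts. Deleting a vertex of degree at most $1$ would leave at least $2(n-1)-1$ edges on $n-1$ vertices, violating $(2,2)$-sparseness, so $G$ has minimum degree at least $2$. And if $G$ has no proper tight set of size at least $2$, then for every edge $e$ the graph $G-e$ has exactly $2n-3$ edges and contains at most $2|W|-3$ edges inside every $W$ with $2\le|W|\le n-1$; hence $G-e$ is $(2,3)$-tight, and as both endpoints of $e$ have degree at least $2$, this exhibits $G$ as Laman-plus-one and we are done. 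So from now on I assume $G$ is not Laman-plus-one, and fix a proper tight set $W$ of size at least $2$ that is inclusion-maximal among such sets.

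The structural engine is a submodularity observation: if $W$ is tight and a vertex $v\notin W$ has at least two neighbours in $W$, then $G[W\cup\{v\}]$ gains at least two edges while sparseness permits at most two, so $W\cup\{v\}$ is again tight. I now distinguish two cases according to the size of $V(G)\setminus W$. If $|V(G)\setminus W|\ge 2$, I claim $H:=G[W]$ is the desired subgraph: were some $v\notin W$ adjacent to two vertices of $W$, then $W\cup\{v\}$ would be a strictly larger proper tight set of size at least $2$, contradicting the maximality of $W$. If instead $V(G)\setminus W=\{u\}$ is a single vertex, then counting edges gives $\deg_G(u)=(2n-2)-(2(n-1)-2)=2$, so $u$ has exactly two neighbours $a\ne b$.

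It remains to handle this last case, and here I would pass to $G':=G-u$. It has $2(n-1)-2$ edges, inherits $(2,2)$-sparseness, hence is $(2,2)$-tight on $n-1$ vertices; by the inductive hypothesis $G'$ is either Laman-plus-one or has a good proper subgraph $H'=G'[W']$ with $|W'|\ge 2$. If $G'$ is Laman-plus-one via an edge $f=xy$, then $G-f=(G'-f)+u$ is obtained from the $(2,3)$-tight graph $G'-f$ by the Henneberg~1 move attaching the degree-$2$ vertex $u$ to the two distinct vertices $a,b$; since this move preserves $(2,3)$-tightness and $\deg_G(x),\deg_G(y)\ge 2$, the graph $G$ would itself be Laman-plus-one, a contradiction. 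So $G'$ has a good subgraph $H'=G'[W']$, and I finish by lifting it to $G$: if $\{a,b\}\not\subseteq W'$ then $u$ has at most one neighbour in $W'$, so $G[W']$ is good in $G$; if $\{a,b\}\subseteq W'$ then, by the submodularity observation, $W'\cup\{u\}$ is a proper tight set of $G$, no vertex outside $W'\cup\{u\}$ is adjacent to $u$ (its only neighbours $a,b$ lie in $W'$), and every such vertex still has at most one neighbour in $W'$, so $G[W'\cup\{u\}]$ is good in $G$.

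I expect the genuine obstacle to be precisely this final case — when every inclusion-maximal proper tight set is the complement of a single degree-$2$ vertex, so that no large tight set is available and the inductive descent to $G-u$ cannot be avoided. The two steps needing care are the transfer of the good subgraph of $G-u$ back to $G$ (verifying that adding $u$ never creates a vertex with two neighbours inside the chosen set) and the invocation of the classical fact that a Henneberg~1 move preserves $(2,3)$-tightness, which is what propagates the Laman-plus-one structure from $G-u$ up to $G$.
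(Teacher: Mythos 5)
The paper does not prove this lemma at all --- it is imported from Nixon's thesis \cite{Nixon-phd} --- so there is no in-paper argument to compare against; what you have written is a self-contained proof, and I believe it is correct. Your route is the natural one: call $W\subseteq V(G)$ \emph{tight} if it induces $2|W|-2$ edges; if no proper tight set of size at least two exists, then $G-e$ is $(2,3)$-tight for every edge $e$ and the minimum-degree bound makes $G$ Laman-plus-one; otherwise an inclusion-maximal proper tight set $W$ with $|V(G)\setminus W|\ge 2$ already has the required ``at most one attachment'' property, because a vertex with two neighbours in $W$ would extend $W$ to a strictly larger proper tight set; and the only remaining configuration, $V(G)\setminus W=\{u\}$ with $\deg(u)=2$, is handled by induction on $G-u$, lifting either the Laman-plus-one certificate (via a Henneberg~1 move, which preserves $(2,3)$-tightness) or the good subgraph $H'$ (adjoining $u$ to $H'$ exactly when both of its neighbours lie there). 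Each of these steps checks out. Two scoping remarks, neither a gap: you assume $G$ is simple, which the printed statement does not say but which matches how the lemma is used in the paper (Laman-plus-one is defined only for simple graphs, and the $E_3$/$V_4$ generation concerns simple graphs); and, as literally stated, the lemma is vacuous for simple graphs because a single vertex is a proper $(2,2)$-tight subgraph to which no outside vertex can be adjacent ``more than once'' --- the version the paper actually needs requires $H$ to contain at least one edge, and your construction does deliver this stronger form since every tight set you exhibit has size at least two, but it would be worth saying so explicitly when you invoke the induction hypothesis in the final case.
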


 For a subgraph $H$ of a graph $G$, let $V(H)$ and $E(H)$ denote the vertex set and the edge
 set of $H$. Denote by $G-H$ the subgraph of $G$ induced by the vertices
 $V(G)\setminus V(H)$. For two subgraphs $H_1$ and $H_2$ of $G$, denote by $H_1\cup H_2$,
 the subgraph $H=(V(H_1)\cup V(H_2), E(H_1)\cup E(H_2))$. Let $E(H_1,H_2)$ be the set of
 edges between the vertices of $H_1$ and $H_2$.

\begin{lemma}
\label{lem:tight-component} Let $G$ be a $(2,k)$-tight graph and let $H$ be a proper
 $(2,k)$-tight subgraph of $G$, for $k>0$. Let $C$ and $D$ be a partition of the vertices
 of $G-H$ such that there is no edge between vertices in C and vertices in $D$. Then both
 the graphs induced by the vertices of $H\cup C$ and $H\cup D$ are also $(2,k)$-tight.
\end{lemma}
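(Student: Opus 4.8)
The plan is to prove both conclusions at once by a short double-counting argument that exploits two facts: that $H$ may be taken to be an \emph{induced} subgraph of $G$, and that the absence of edges between $C$ and $D$ lets one split $E(G)$ essentially into the edges of $H$, those touching $C$, and those touching $D$.

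First I would reduce to the case that $H=G[V(H)]$. The induced subgraph $G[V(H)]$ is a subgraph of $G$, hence $(2,k)$-sparse, so it has at most $2|V(H)|-k$ edges; but it already contains the $2|V(H)|-k$ edges of the $(2,k)$-tight graph $H$, so $E(G[V(H)])=E(H)$. I would also dispose of the trivial case $C=\emptyset$ (then $G[V(H)\cup C]=H$ is $(2,k)$-tight by hypothesis), and symmetrically $D=\emptyset$, so from now on $C,D\neq\emptyset$. Write $G_C=G[V(H)\cup C]$ and $G_D=G[V(H)\cup D]$. Since $G$ has no edge between $C$ and $D$, and since $V(H),C,D$ partition $V(G)$, one checks that $E(G_C)\cup E(G_D)=E(G)$ while $E(G_C)\cap E(G_D)=E(H)$ (an edge lies in both $G_C$ and $G_D$ iff both its endpoints lie in $V(G_C)\cap V(G_D)=V(H)$). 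Hence
\[
|E(G_C)|+|E(G_D)| = |E(G)|+|E(H)| = \bigl(2|V(G)|-k\bigr)+\bigl(2|V(H)|-k\bigr).
\]
On the vertex side, $|V(G_C)|+|V(G_D)| = 2|V(H)|+|C|+|D| = |V(H)|+|V(G)|$, so combining the two displays gives the exact identity
\[
|E(G_C)|+|E(G_D)| = 2\bigl(|V(G_C)|+|V(G_D)|\bigr)-2k.
\]

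To finish, I would note that $G_C$ and $G_D$, being subgraphs of the $(2,k)$-sparse graph $G$, are themselves $(2,k)$-sparse, and that each has at least $|V(H)|+1$ vertices (here $C,D\neq\emptyset$ is used), which is enough for the bound $|E[W]|\le 2|W|-k$ to be the operative one on the whole of $G_C$ and of $G_D$. Thus $|E(G_C)|\le 2|V(G_C)|-k$ and $|E(G_D)|\le 2|V(G_D)|-k$; adding these reproduces exactly the identity above, so both inequalities must be equalities. Therefore $G_C$ is $(2,k)$-sparse with precisely $2|V(G_C)|-k$ edges, i.e.\ $(2,k)$-tight, and likewise $G_D$.

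The argument is essentially all counting, so there is no real obstacle; the only point needing care is the vertex-count threshold invoked in the last step, since for $k\in\{3,4\}$ the inequality $|E[W]|\le 2|W|-k$ only governs subsets $W$ that are not too small. This is precisely what the hypothesis $k>0$ together with $(2,k)$-tightness of $H$ (which forces $|V(H)|\ge 2$) is there to guarantee, so $|V(G_C)|,|V(G_D)|\ge 3$ in the only non-trivial case.
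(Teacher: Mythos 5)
Your proof is correct and follows essentially the same double-counting argument as the paper: both exploit that the absence of $C$--$D$ edges makes $|E(G_C)|+|E(G_D)|=|E(G)|+|E(H)|$, then use $(2,k)$-sparseness of the two induced subgraphs to force equality in both edge counts. Your version is slightly more careful than the paper's (the explicit reduction to $H$ induced, and the remark about the vertex-count threshold for $k\in\{3,4\}$), but the underlying idea is identical.
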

\begin{proof}
 Since $G$ and $H$ are both $(2,k)$-tight and $|E(C,D)|=0$,
 $|E(G)| = 2|V(G)|-k = 2|V(H)|+2|V(C)|+2|V(D)|-k = |E(H)|+|E(C)|+|E(H,C)|+|E(D)|+|E[H,D]|$,
 and $|E(H)| = 2|V(H)|-k$.
Again since $H\cup D$ is a subgraph of $G$, it is $(2,k)$-sparse, so
 $|E(H\cup D)| = |E(H)|+|E(D)|+|E[H,D]| \le 2|V(H)|+2|V(D)|-k$.
Thus $|E(H\cup C)| = |E(H)|+|E(C)|+|E[H,C]| \ge 2|V(H)|+2|V(C)|-k$.
Since $H\cup C$ is a subgraph of $G$ and hence is $(2,k)$ sparse,
this implies that it is $(2,k)$-tight. Similarly $H\cup D$ is also $(2,k)$-tight.
\end{proof}

We are now ready to prove Lemma~\ref{lem:moves-generate}.

\begin{proof}[Proof of Lemma~\ref{lem:moves-generate}]
 Part~(\ref{enum:generate-23}), that every plane Laman graph can be generated by $H_1$ and $H_2$ moves starting from a triangle, is already known~\cite{Hen11,Lov82,HOR05}.

 Now, if $G^*$ is the plane dual of a plane Laman graph $G$, then we follow the construction sequence
 of $G$ with $H_1$ and $H_2$ moves in the dual and observe that this gives a construction sequence
 of $G^*$ with $V_2^2$ and $E_3$ moves, starting with three parallel edges.
 This proves~(\ref{enum:generate-23-dual}).

 Let  $G$ be a plane $(2,2)$-tight graph. We prove by induction on $|V(G)|$,
 that $G$ can be generated by $E_3$ and $V_4$ moves, starting with a vertex. If $|V(G)| = 1$,
 this clearly holds.
 Assume that $|V(G)| \geq 2$. If $G$ is Laman-plus-one, it can be obtained from a single vertex by a single $V_4$ move, followed by a number of $E_3$ moves, by
 Lemma~\ref{lem:lam+1}, and the claim follows. Otherwise, by
 Lemma~\ref{lem:2-2-struct}, there exists a proper $(2,2)$-tight subgraph $H$ of $G$ such
 that no vertex of $G-H$ is adjacent to more than one vertex in $H$. Furthermore, since $G$
 is $(2,2)$-tight, $H$ is connected. Since $H$ is a proper subgraph of $G$, assume \WLOG
 that the outer face of $H$ is not vertex-empty in $G$ (otherwise at least one
 internal face is not vertex-empty and a similar reasoning holds). Let $H'$ be the subgraph of
 $G$ consisting of $H$ and all the vertices inside the outer boundary of $H$. By
 Lemma~\ref{lem:tight-component}, $H'$ is also a planar $(2,2)$-tight graph,
 which is a proper subgraph of $G$. Thus by the induction hypothesis, $H'$ can be constructed
 from a single vertex by $E_3$ and $V_4$ moves. Let $\Pi_1$ denote this sequence of these
 two moves. The graph $G'$ obtained from $G$ by merging $H'$ into a
 single vertex is simple and planar $(2,2)$-tight. By the induction hypothesis, $G'$
 can be obtained from a single vertex by a sequence $\Pi_2$ of $E_3$ and $V_4$ moves.
 The sequence $\Pi_2$ followed by the sequence $\Pi_1$ generates
 $G$ from a vertex.
\end{proof}

\begin{lemma}\label{lem:moves-in-CCA}
 Let $G$ be a plane graph with a CCA-representation and $G'$ be a plane graph obtained from $G$ by a $V_4$, $E_3$ or $V_2^2$ move. Then $G'$ admits a CCA-representation as well.
\end{lemma}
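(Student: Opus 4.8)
The plan is to handle each of the three moves $V_4$, $E_3$, and $V_2^2$ separately, in each case showing how to locally modify a given CCA-representation $\mathcal{R}$ of $G$ to obtain one for $G'$. The common strategy is that each move replaces a single vertex $u$ (or, for $E_3$, one endpoint of an edge) by a small constant-size gadget, and all other vertices and edges of $G$ can keep their arcs essentially unchanged; so the task reduces to a purely local construction in a small disk around the arc representing $u$, together with a bookkeeping argument that the contacts formerly made by $u$ are correctly redistributed to the new vertices.

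First I would set up the local picture for a move at a vertex $u$. In $\mathcal{R}$, the vertex $u$ is a circular arc $\alpha_u$; the edges at $u$ split into those realized by an endpoint of some other arc touching the interior of $\alpha_u$ (call these the ``incoming-type'' contacts of $u$) and those realized by an endpoint of $\alpha_u$ touching another arc (at most two, since an arc has two endpoints). The rotation of edges around $u$ in the plane embedding is exactly the cyclic order in which these contact points appear along $\alpha_u$ (with the two endpoint-contacts at the ends). For each move, the faces $f_0, f_1, \ldots$ named in the definition cut this cyclic order into the arcs of neighbors that are reassigned to each new vertex $v_i$. The key geometric fact I would use is that one can replace $\alpha_u$ by several short arcs, packed into a tiny neighborhood of $\alpha_u$, realizing a prescribed small plane graph among themselves (a $K_4$ for $V_4$, an edge for $E_3$, a digon/pair of parallel edges for $V_2^2$), while arranging that along the outer ``silhouette'' of this gadget the contact points needed for the reassigned neighbor-arcs appear in the correct cyclic order and in the correct angular sectors corresponding to $f_0, f_1, f_2$ (resp.\ $f_0, f_1$). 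Because the neighbor arcs only need their touching endpoints to move by an arbitrarily small amount, each neighbor arc can be extended or retracted slightly to reach the new gadget, and any endpoint-contacts that $\alpha_u$ itself made on other arcs are inherited by the appropriate $v_i$.

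Concretely, for $V_4$ I would build a plane drawing of $K_4$ out of four small circular arcs inside an $\varepsilon$-disk around a point of $\alpha_u$: three ``outer'' arcs $v_0, v_1, v_2$ arranged around a central arc forming the inner vertex, with the three outer arcs pairwise touching and each touching the center, and with the three outer faces of this $K_4$ gadget pointing into the three angular sectors determined by $f_0, f_1, f_2$. Then neighbors of $u$ lying (clockwise) between $f_i$ and $f_{i+1}$ get redirected to touch $v_i$. For $E_3$ one splits $\alpha_u$ (here $u$ is the endpoint $v$ being split) into two arcs $v_1, v_2$ touching each other, distributing the contacts according to the two sides of the edge $e$ and the face $f$; this is the easiest case since it is essentially ``cut the arc in two and nudge''. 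For $V_2^2$ one uses two arcs $v_1, v_2$ that touch each other twice (two of the four endpoints suffice to realize the two parallel edges), again with the outer faces aimed at $f_0$ and $f_1$. In all cases one checks that interior-disjointness is preserved, since the gadget lives in a disk disjoint from all other arcs and the neighbor arcs are only perturbed within that disk.

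The main obstacle I expect is the realizability of the $K_4$ gadget in the $V_4$ case with the correct combinatorial interface: one must simultaneously (a) realize $K_4$ by contacts of four circular arcs, (b) control which of the bounded/unbounded faces of that realization receives which reassigned sector of $u$'s neighbors, and (c) guarantee enough ``room'' on each $v_i$ for all the neighbor contacts it must absorb, including possibly the endpoint-contacts that $\alpha_u$ had made on other arcs (which constrain the positions of the free endpoints of the $v_i$). This is where the flexibility of circular arcs over straight segments matters: by choosing the radii and spans of the gadget arcs appropriately we can make each $v_i$ present an arbitrarily long free portion of its interior into the correct angular sector, so condition (c) is satisfied; and since the three outer faces of a suitably drawn planar $K_4$ can be rotated to any desired cyclic position, (b) is satisfied as well. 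The remaining cases $E_3$ and $V_2^2$ are strictly simpler and follow from the same local-perturbation principle. I would present the $V_4$ construction in full detail with a figure and then remark that $E_3$ and $V_2^2$ are analogous degenerations of it.
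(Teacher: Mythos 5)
Your proposal is correct and follows essentially the same approach as the paper: both perform a local surgery in a thin neighborhood of the arc of the replaced vertex, splitting it into two or three arcs that inherit the consecutive groups of neighbor contacts determined by the chosen faces, and (for $V_4$) closing the three arcs into a triangle with a fourth small arc inside to realize the plane $K_4$. The only detail to tighten is that the $V_4$ gadget must be spread along the whole of $\alpha_u$ (as in your ``tiny neighborhood of $\alpha_u$'' phrasing, and as the paper does with three nearly parallel copies of the original arc) rather than packed into an $\varepsilon$-disk around a single point, since neighbor arcs touching $\alpha_u$ far from that point could not be rerouted there by an arbitrarily small perturbation.
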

\begin{proof}
 Let $G'$ be obtained from $G$ by one move that is either an $V_4$, $E_3$ or $V_2^2$ move. In each case, we show how to locally modify a given CCA-representation of $G$ into a CCA-representation of $G'$. All the cases are similar, so we restrict ourselves to a careful description of the first case only, and provide figures illustrating the remaining cases.

 \begin{figure}[tb]
  \centering
  \includegraphics[scale=0.9]{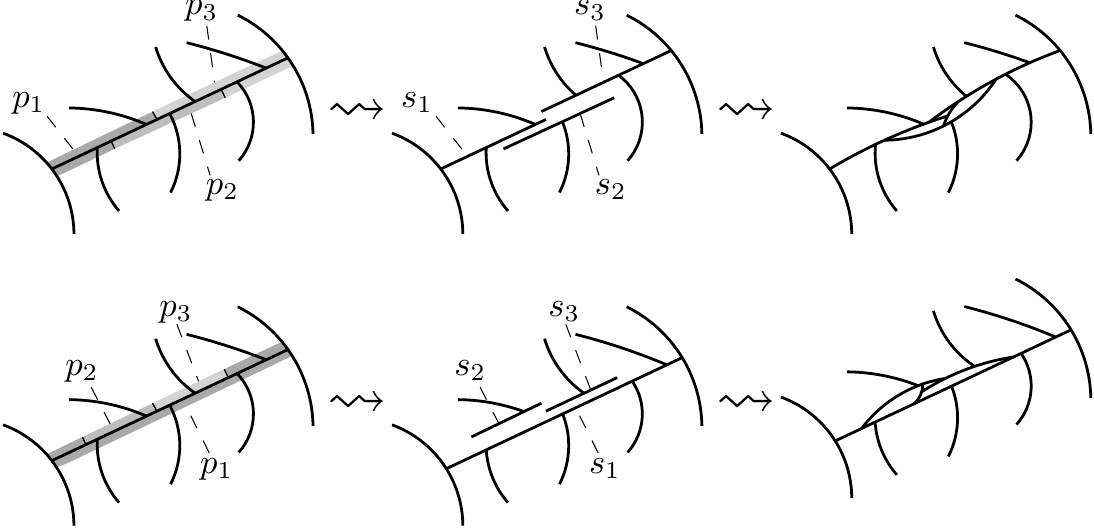}
  \caption{Performing a $V_4$ move in a CCA-representation.}
  \label{fig:V4-CCA}
 \end{figure}

 Let $v$ be the vertex in $G$ and $\{v_1,v_2,v_3,v_4\}$ be the four vertices in $G'$ that replace $v$. Let $S_i = N_{G'}(v_i) \setminus \{v_1,v_2,v_3,v_4\}$, $i = 1,2,3,4$. By definition, we have $S_i \cap S_j = \emptyset$ for $i \neq j$, $S_4 = \emptyset$, $S_1 \cup S_2 \cup S_3 = N_G(v)$ and each of $S_1,S_2,S_3$ forms a subset of $N_G(v)$ that appears consecutively in the clockwise order around $v$ in $G$. Assume \WLOG that the circular arc $c_v$ for $v$ in the given CCA-representation of $G$ is a straight segment. The boundary of $c_v$ can be partitioned into three consecutive pieces $p_1,p_2,p_3$, so that $p_i$ contains exactly the contacts corresponding to vertices in $S_i$, $i = 1,2,3$; see Fig.~\ref{fig:V4-CCA}.

From the pieces $p_1,p_2,p_3$, we define straight segments $s_1, s_2, s_3$ parallel to $c_v$ so that each $s_i$ intersects exactly the circular arcs for vertices in $S_i$, $i = 1,2,3$. Then, each $s_i$ is ``curved'' into a circular arc, so that $s_1,s_2,s_3$ form a triangle with one free endpoint on the inside. We add a fourth circular arc for $v_4$ in the triangle, containing the free endpoint in its interior and touching the other two circular arcs with its two endpoints; see Fig.~\ref{fig:V4-CCA}.

 \begin{figure}[tb]
  \centering
  \includegraphics[scale=0.9]{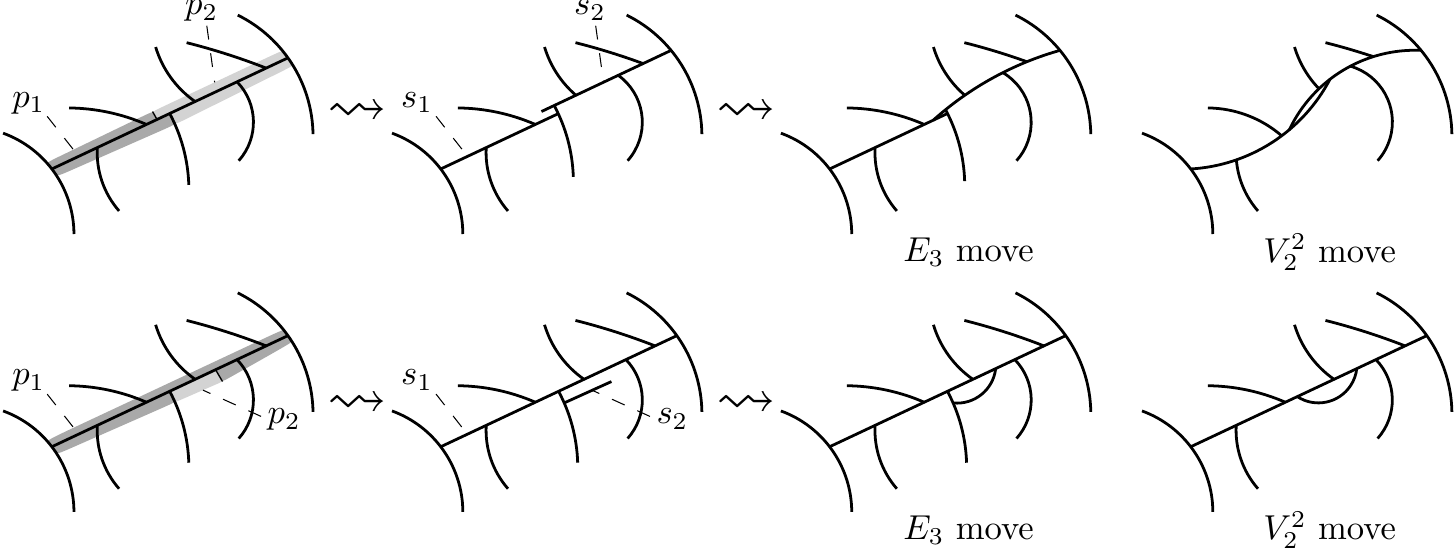}
  \caption{Performing an $E_3$ move and $V_2^2$ move in a CCA-representation.}
  \label{fig:other-CCA}
 \end{figure}

 The cases for $E_3$ move or $V_2^2$ are similar; the only difference is that we define only two sets $S_1,S_2$ (with $S_1 \cap S_2 = \{u\}$ in the case of an $E_3$ move) and consequently only two pieces $p_1,p_2$ and two straight segments $s_1,s_2$; see Fig.~\ref{fig:other-CCA}.
\end{proof}

Finally, we prove the main theorem of this section.

\begin{proof}[Proof of Theorem~\ref{thm:CCA}, Cases~1 and~2]
 Let $G$ be a plane graph. We shall show that $G$ admits a CCA-representation, provided it is $(2,2)$-sparse or a co-Laman graph.
 \begin{description}
  \item[Case~1, $G$ is $(2,2)$-tight.] By Lemma~\ref{lem:moves-generate}, $G$ can be generated by $E_3$ and $V_4$ moves, starting from a graph $G_0$ consisting of a single vertex. Since $G_0$ admits a CCA-representation (Fig.~\ref{fig:base-case-CCA-A}), by Lemma~\ref{lem:moves-in-CCA},  $G$ also admits a CCA-representation.

  \item[Case~2, $G$ is co-Laman.] By Lemma~\ref{lem:moves-generate}, $G$ can be generated by $E_3$ and $V_2^2$ moves, starting from a graph $G_0$ with two vertices and three parallel edges. Since $G_0$ admits a CCA-representation (Fig.~\ref{fig:base-case-CCA-B}), by Lemma~\ref{lem:moves-in-CCA}, $G$ also admits a CCA-representation.
  \end{description}
\vspace{-3ex}
\end{proof}

\begin{figure}[tb]
 \centering
 \subfigure[]{
    \includegraphics[scale=0.6,page=1]{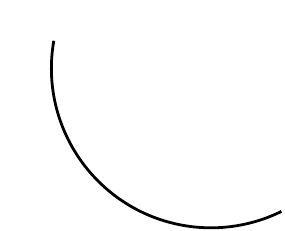}
    \label{fig:base-case-CCA-A}
 }
 ~~
 \subfigure[]{
    \includegraphics[scale=0.6,page=2]{base-case-CCA}
    \label{fig:base-case-CCA-B}
 }
 ~~~~~~~~~~~~~
 \subfigure[]{
    \includegraphics[scale=0.75]{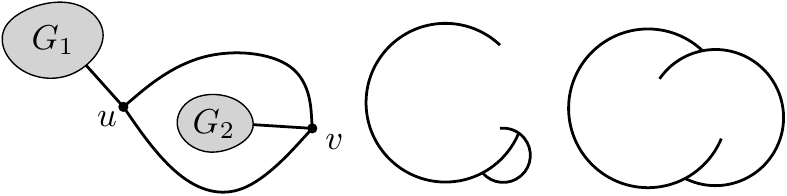}
    \label{fig:20-counter}
 }
 \caption{The base case $G_0$ for (a)~$(2,2)$-tight graphs and (b)~co-Laman graphs.
  (c)~A plane $(2,0)$-tight graph that does not admit a CCA-representation.}
\end{figure}

\section{Good Orientations and One-Sided Representations}
\label{sec:good}

In order to prove Theorem~\ref{thm:CCA}, we need an additional concept of a plane graph $G$. An orientation of $G$ is called \df{good} if, for every vertex $v$ of $G$, all the outgoing edges (equivalently incoming edges) incident to $v$ are consecutive in the circular ordering of the edges around~$v$. A CCA-representation is called \df{one-sided} if, for each arc $a$, the endpoints of other arcs that touch $a$ all do so on one side of~$a$. This analogous to the concept of one-sided segment contact representations~\cite{Hli98,EppMumSpe-SJC-12}. A CCA-representation is  \df{interior-disjoint} if each arc has nonzero curvature and the interior of the convex hull of each arc is disjoint from all the other arcs.

\begin{lemma}\label{lem:good-to-CCA}
 A simple plane graph with a good $2^-$-orientation has an interior-disjoint CCA-representation.
\end{lemma}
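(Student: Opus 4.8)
The plan is to take a good $2^-$-orientation of a simple plane graph $G$ and convert it, vertex by vertex, into an interior-disjoint CCA-representation, using a circle packing as the underlying geometric scaffold. First I would augment $G$ to a $(2,0)$-tight graph by Lemma~\ref{lem:augmentation} (the $k=0$ case), extending the good $2^-$-orientation to a good $2$-orientation along the way (each doubling/added edge can be oriented so that the consecutiveness of outgoing edges at every vertex is preserved; one must check this is always possible, which is the first small technical point). So \WLOG $G$ is $(2,0)$-tight with a good $2$-orientation. Now apply a primal-dual circle packing (Koebe--Andreev--Thurston, in the Brightwell--Scheinerman form) to the plane triangulation obtained by adding one vertex per face, or more directly use the known fact that any plane graph has a representation in which vertices are disks and edges are realized by tangency points lying on a common perpendicular; the key property I want is a concrete point $p_v$ for each vertex and, for each edge $uv$, a ``contact point'' $q_{uv}$ on the segment $p_up_v$, with all these points in convex position locally so that arcs can be threaded through them.

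The heart of the construction is local and is where the good orientation does its work. At each vertex $v$ the two outgoing edges $e_1(v), e_2(v)$ are consecutive around $v$, so the incoming edges occupy a complementary arc of directions. I would represent $v$ by a single circular arc $a_v$ whose two endpoints are placed at (or very near) the contact points $q_{e_1(v)}$ and $q_{e_2(v)}$ of its two outgoing edges — these endpoints will touch the arcs of the two out-neighbors — and whose ``body'' is a slightly bent arc passing through the contact points $q_{uv}$ of all incoming edges $uv$. Because the incoming edges are consecutive around $v$, those contact points lie on one side of the chord joining the two outgoing contact points, so a single circular arc of small enough curvature can interpolate them all while keeping its convex hull empty (one-sidedness falls out automatically, as does the interior-disjoint condition). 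The circle packing guarantees that, after scaling radii to be small, the arc of $v$ stays inside a tiny neighborhood of $v$'s disk and only meets the arcs at the prescribed contact points; disjointness of the packing then gives interior-disjointness of the arcs. One also has to ensure nonzero curvature — handled by perturbing any accidentally-collinear configuration, which is possible because a one-parameter family of admissible curvatures is available once the endpoints and an interior touch point are fixed.

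The main obstacle I expect is the simultaneous realizability: each arc $a_v$ must pass \emph{exactly} through the incoming contact points of $v$, but those same points are the \emph{endpoints} of the arcs $a_u$ for the corresponding out-neighbors $u$, so the geometric constraints at the two ends of every edge are coupled across the whole graph. The way to defeat this is to decouple endpoints from interior touches: place the endpoint of $a_u$ strictly in the relative interior of $a_v$'s body (an interior point, as the definition of CCA-representation demands), not forcing $a_v$ to literally pass through a pre-fixed point. Then for each $v$ independently we have genuine freedom — choose the chord between $v$'s two outgoing contact points, pick a curvature small in absolute value, and verify by a compactness/continuity argument that for all sufficiently small radii in the packing the resulting arc avoids everything except the desired contacts. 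So the proof structure is: (1)~reduce to $(2,0)$-tight with good $2$-orientation; (2)~fix a primal circle packing and shrink radii; (3)~for each vertex build its arc from its two outgoing contact points plus a low-curvature body sweeping its (consecutive) incoming side; (4)~check, using disjointness of the packing and a perturbation for nonzero curvature, that the union is an interior-disjoint CCA-representation of $G$, hence of the original graph after deleting the added/doubled edges.
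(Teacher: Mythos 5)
Your scaffold (a Koebe disk packing plus the good orientation dictating where each arc's endpoints go) is the right one, but the actual construction of the arcs has a genuine gap. All of the relevant points for a vertex $v$ --- the tangency points of its two outgoing edges and the tangency points of all its incoming edges --- lie on the circle $C(v)$ of the packing, hence are concyclic. So an arc that ``passes through the contact points of all incoming edges'' while having its endpoints at the two outgoing contact points is forced to be the sub-arc of $C(v)$ itself (once there are at least three such points); it cannot simultaneously have ``curvature small in absolute value.'' As written, your low-curvature arc near the chord misses the incoming tangency points, so the contacts fail, and your proposed fix --- decoupling by letting the endpoint of $a_u$ land wherever it happens to fall in the interior of $a_v$ --- does not actually decouple anything: the endpoint of $a_u$ is a point of $a_u$, whose position is constrained by $a_u$'s own incoming contacts, so the system stays globally coupled and the appeal to ``compactness/continuity'' is doing all the work without justification. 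The resolution is the observation you circled around but did not land on: represent $v$ by the sub-arc of $C(v)$ obtained by deleting the open arc between the two outgoing tangency points that contains no other tangency point (this empty side exists precisely because the orientation is good). Then every incoming tangency point of $v$ lies in the interior of $a_v$, every endpoint of $a_v$ lies on the out-neighbor's circle and hence in the interior of that neighbor's arc, the curvature is automatically nonzero, and interior-disjointness is immediate because the convex hull of $a_v$ sits inside the disk bounded by $C(v)$ and the disks have disjoint interiors. This is the paper's proof, and it is three lines once the sub-arc choice is made.

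Separately, your preliminary reduction to the $(2,0)$-tight case via Lemma~\ref{lem:augmentation} is both unnecessary and itself unjustified: adding an edge at a vertex $v$ with $\outdeg(v)\le 1$ and keeping the orientation good requires the new outgoing edge to be embedded consecutively with the existing one around $v$, a planarity constraint you cannot always satisfy (and the paper's Fig.~\ref{fig:counter} shows goodness is fragile under such modifications). The lemma handles $\outdeg(v)\in\{0,1\}$ directly by placing the second cut point $p'$ on $C(v)$ close to $p$ (or both away from all contacts), with no augmentation at all.
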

\begin{proof}
Consider a plane graph $G$ with a good $2^-$-orientation. As with any plane graph, $G$ has a contact representation with disks~\cite{Koebe36}. For each vertex $v$ of $G$ with $\outdeg(v) = 2$, the two outgoing edges of $v$ define two points, $p$ and $p'$, on the circle $C(v)$ representing $v$. If $\outdeg(v)= 1$, the outgoing edge defines $p$ and we choose $p' \in C(v)$ very close to it. If $\outdeg(v) = 0$ we choose $p$ and $p'$ distinct from all contacts of $C(v)$ and close to each other. In both cases the two points, $p$ and $p'$, split $C(v)$ into two circular arcs. Since the $2^-$-orientation is good, one of these two arcs contains none of the contacts of $C(v)$ with other disks. We represent each vertex $v$ by the other circular arc defined by $C(v),p$ and $p'$, which contains all the contacts of $C(v)$; see Fig.~\ref{fig:good-to-CCA}.
\end{proof}

\begin{lemma}\label{lem:4-reg}
Every plane graph $G$ with maximum degree $4$ has a good $2^-$-orientation.
\end{lemma}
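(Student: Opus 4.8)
The plan is to prove the stronger statement that every plane graph $G$ with maximum degree $4$ admits a good $2^-$-orientation, i.e., an orientation in which every out-degree is at most $2$ and, at every vertex, the outgoing edges form a contiguous block in the cyclic order. First I would reduce to the case where $G$ is $4$-regular: if some vertex has degree at most $3$, I would add a new vertex in an incident face joined to that vertex, or more directly, I would augment $G$ (keeping it plane and of maximum degree $4$) until it is $4$-regular, then restrict any good $2^-$-orientation of the augmentation back to $G$; since deleting edges cannot destroy the "contiguous outgoing block" property (it only removes entries from the cyclic list) and cannot increase out-degrees, the restriction is still a good $2^-$-orientation of $G$. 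So assume $G$ is $4$-regular and plane. A $4$-regular plane graph has an Eulerian orientation in which every vertex has in-degree $2$ and out-degree $2$; moreover, because the graph is plane and $4$-regular, it decomposes into edge-disjoint closed "straight-ahead" walks (at each degree-$4$ vertex the walk enters on one edge and leaves on the opposite edge in the cyclic order). Orienting each such walk consistently gives a $2$-orientation, but it is not automatically good.

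The key step is to obtain goodness, i.e., to make the two outgoing edges at each vertex cyclically adjacent rather than opposite. I would instead argue directly via the pebble-game / $2^-$-orientation machinery of Lemma~\ref{lem:orientations}(iv): a maximum-degree-$4$ graph is $(2,0)$-sparse (every $s$-vertex subgraph has at most $4s/2 = 2s$ edges), hence has a $2^-$-orientation; the remaining task is to \emph{rotate} this orientation at each vertex so that the outgoing edges become contiguous while preserving the out-degree bound globally. Here is the mechanism: if at a vertex $v$ of degree $4$ the two outgoing edges are "opposite" in the cyclic order $e_1, e_2, e_3, e_4$ (say $e_1, e_3$ out and $e_2, e_4$ in), I would reverse one of them to fix the local pattern, but this raises the out-degree of a neighbor; I then chase the excess along a directed path until it reaches a vertex of out-degree $\le 1$ (which must exist whenever the orientation is not a full $2$-orientation) or, in the $2$-regular case, I use the fact that the underlying undirected structure still permits a path-reversal that trades one "bad" vertex for a strictly better potential. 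To make this terminate I would set up a potential function counting the number of vertices whose outgoing edges are not contiguous, and show each local repair strictly decreases it. Alternatively — and this is probably cleaner — I would build the good orientation inductively by removing a vertex $v$ of degree $\le 4$ from $G$, applying induction to $G - v$ to get a good $2^-$-orientation, and then re-inserting $v$: the up-to-four edges at $v$ go to vertices in $G-v$, and I orient the edges at $v$ so that all of $v$'s chosen outgoing edges (at most $2$ of them) are consecutive around $v$; the subtlety is that an edge oriented \emph{into} $v$ costs nothing at $v$ but adds to the out-degree of its other endpoint, so I must choose which $\le 2$ incident edges point away from $v$ so as not to overload neighbors, again pushing excess along directed paths to sinks of deficient out-degree.

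The main obstacle I expect is exactly this global out-degree bookkeeping during the local repair: naively rotating the orientation at one vertex to restore contiguity can create an out-degree-$3$ vertex, and one must guarantee the resulting alternating/directed path of reversals terminates at a vertex that can absorb the extra outgoing edge. In the $(2,0)$-sparse but not $(2,0)$-tight situation such an absorbing vertex always exists (some vertex has out-degree $\le 1$), and for the genuinely $4$-regular $(2,0)$-tight case one argues that a full $2$-orientation coming from the straight-ahead decomposition is \emph{already} good — because following a straight-ahead walk through $v$, the walk contributes one in-edge and one out-edge that are opposite, but summing the two walks through $v$ one can choose orientations of the (at most two) walks through $v$ independently, and a short case analysis on how the $\le 2$ walks pass through $v$ shows the two outgoing edges can always be made adjacent. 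I would present the inductive vertex-removal argument as the main line, and handle the "absorb the excess along a path" step with a standard augmenting-path argument on the directed graph, invoking $(2,0)$-sparsity of $G-v$ to guarantee the path ends at a vertex of out-degree $< 2$.
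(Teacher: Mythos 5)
Your proposal circles the right idea but commits to the wrong main line. The argument you designate as primary --- take an arbitrary $2^-$-orientation, then repair a vertex whose outgoing edges are non-contiguous by reversing one of them and chasing the resulting out-degree excess along a directed path to a vertex of out-degree at most $1$ --- has a genuine gap: reversing a directed path changes the \emph{set} of outgoing edges at every intermediate vertex of the path (one in-edge becomes out and one out-edge becomes in, even though the out-degree is preserved), so a single repair can create new vertices whose outgoing edges are non-contiguous. Your potential function (the number of bad vertices) is asserted, not shown, to decrease, and there is no evident reason it should. Worse, after your reduction to the $4$-regular case the graph is $(2,0)$-tight, every vertex must have out-degree exactly $2$, and no absorbing vertex of out-degree at most $1$ exists, so the path-chasing mechanism is unavailable exactly where the problem is concentrated. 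The $4$-regularization itself also rests on an unproved augmentation claim (that every plane graph of maximum degree $4$ extends to a $4$-regular plane graph), which the paper never needs.

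The idea you relegate to a sub-case --- decompose the edges into straight-ahead walks that continue through each degree-$4$ vertex along the opposite edge, and orient each walk consistently --- is in fact the paper's entire proof, and it requires neither a case analysis nor a careful choice of walk orientations: the two passages (of one or two walks) through a degree-$4$ vertex use the two opposite pairs of edges there, and each passage contributes exactly one outgoing edge, so the two outgoing edges lie in different opposite pairs and are therefore automatically cyclically adjacent, however the walks are oriented. Vertices of degree at most $3$ are harmless because any set of at most two edges among at most three is consecutive; one only needs out-degree at most $2$ there, which the walk construction supplies. Promote the straight-ahead decomposition to your main line, drop the repair machinery and the regularization, and your proof becomes the paper's.
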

\begin{proof}
 First note that vertices of degree strictly less than $4$ are harmless, as long as they have at most two outgoing edges: their outgoing edges (if any) cannot be non-consecutive. In order to find a $2^-$-orientation of $G$ in which every vertex of degree $4$ has consecutive outgoing edges, we define a number of walks in $G$. Start with any edge and define a walk so that, when entering some vertex $v$ of degree $4$ via edge $e$ the walk always continues with the edge $e'$ that lies opposite of $e$ at $v$. At a degree-3 vertex that has not already been made part of one of these walks, continue the walk with an arbitrary incident edge, and otherwise stop. Orienting every edge in this walk consistently and starting another iteration with any so-far unoriented edge (if any exists), eventually gives a good $2^-$-orientation of $G$.
\end{proof}

 Not every planar $(2,0)$-sparse graph has a good orientation; a counterexample is easy to construct by adding sufficiently many degree-$2$ vertices. Moreover, there is a counterexample with minimum degree $3$; see Fig.~\ref{fig:counter}. Indeed, the bold subgraph (the one induced
 by the black vertices) has five edges and four vertices. Thus,  in any $2^-$-orientation, at least one black vertex has two bold outgoing edges. At this vertex, all the light edges must be incoming, necessarily breaking up its two outgoing edges.

\begin{figure}[tb]
 \centering
 \subfigure[]{
  \includegraphics[scale=0.9]{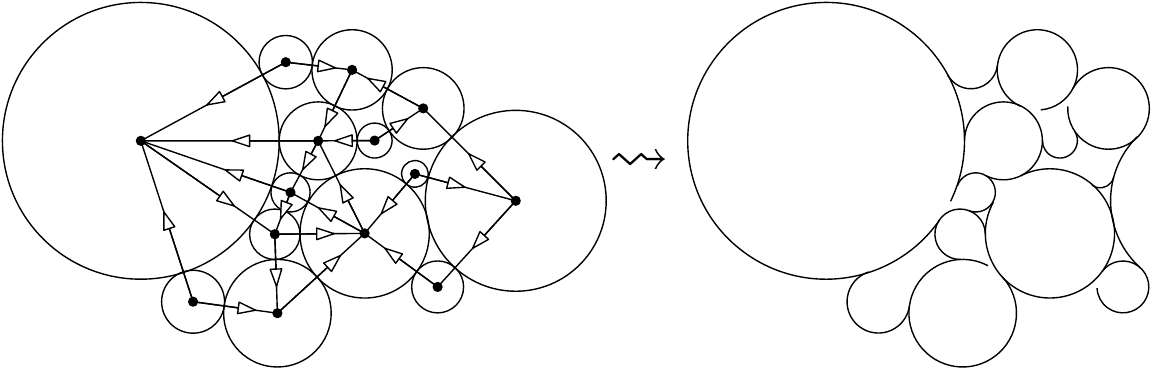}
  \label{fig:good-to-CCA}
 }
 \hfill
 \subfigure[]{
  \includegraphics[width=0.2\textwidth]{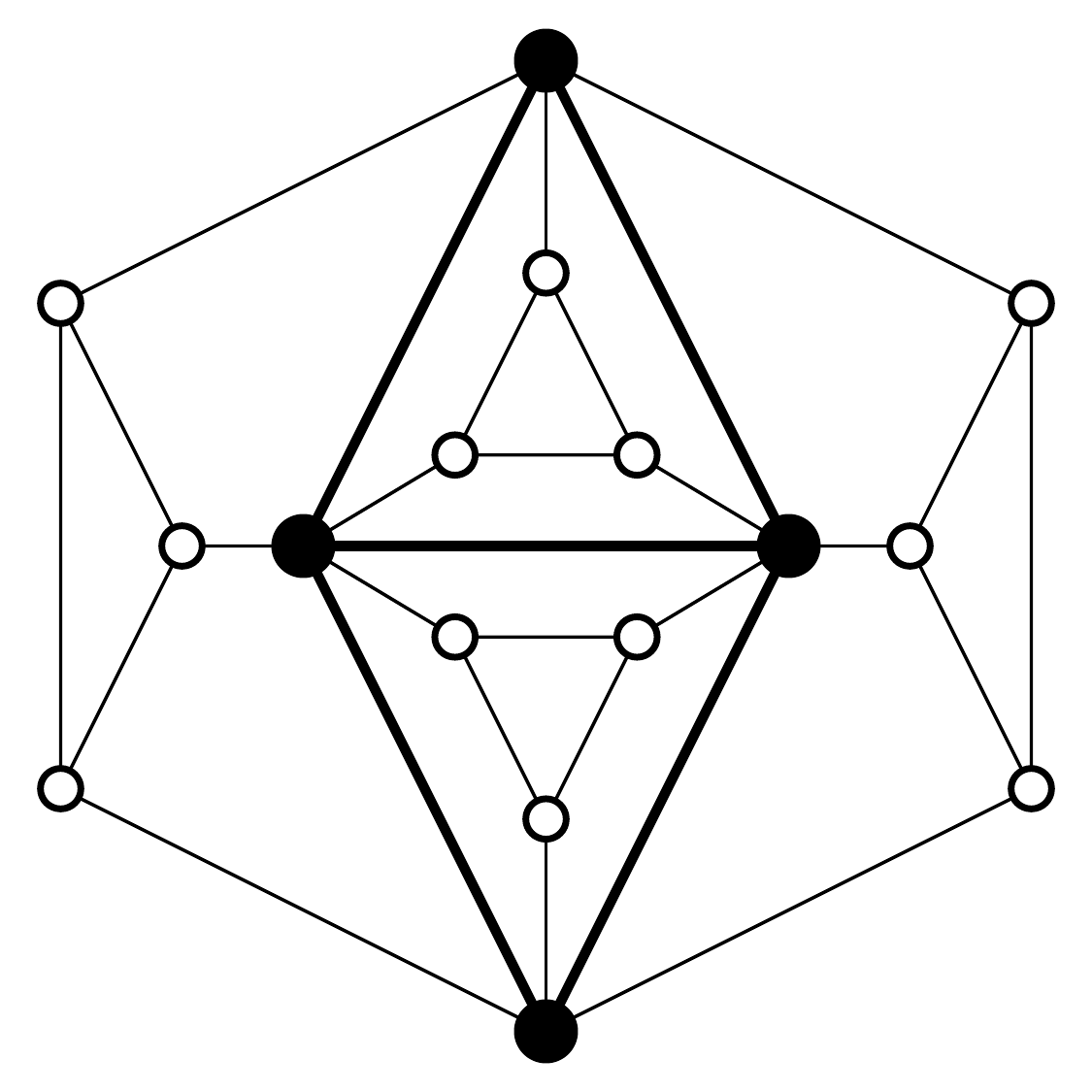}
  \label{fig:counter}
 }
 \caption{(a)~From a contact representation with disks and a good $2^-$-orientation to a CCA-representation.
  (b)~A planar Laman graph with the minimum degree 3, that has no good $2$-orientation.}
\end{figure}

\begin{lemma}
\label{lem:good-equivalents}
For a simple plane graph $G$, the following three statements are equivalent.
~\\[-4ex]
\begin{enumerate}[\rm (1)]
\item $G$ has a good $2^-$-orientation
\item $G$ has a one-sided CCA-representation
\item $G$ has an interior-disjoint CCA-representation
\end{enumerate}
\end{lemma}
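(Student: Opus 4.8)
The plan is to prove Lemma~\ref{lem:good-equivalents} by a cyclic chain of implications $(1)\Rightarrow(3)\Rightarrow(2)\Rightarrow(1)$, reusing the machinery already in place. The implication $(1)\Rightarrow(3)$ is exactly Lemma~\ref{lem:good-to-CCA}, so nothing new is needed there. For $(3)\Rightarrow(2)$, I would argue that an interior-disjoint CCA-representation is automatically one-sided: if an arc $a$ has nonzero curvature and the interior of its convex hull meets no other arc, then every arc touching $a$ must lie in the complement of that convex hull, i.e.\ on the convex (outer) side of $a$; hence all contacts on $a$ occur on the same side. The only thing to be careful about is that a ``side'' of a circular arc is well defined precisely because the arc has nonzero curvature (it lies on a circle, which separates the plane), so the convex-hull side versus the reflex side is unambiguous; degenerate straight segments are excluded by the definition of interior-disjoint, and for genuinely curved arcs the argument is immediate.

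The substantive direction is $(2)\Rightarrow(1)$: from a one-sided CCA-representation I must extract a good $2^-$-orientation. Here I would orient each edge of $G$ as follows. An edge of $G$ corresponds to an endpoint of some arc $a$ touching the interior of another arc $b$; orient that edge from the vertex of $a$ to the vertex of $b$ (i.e.\ ``point at whom you touch''). Since each arc has exactly two endpoints, each vertex has out-degree at most $2$, so this is a $2^-$-orientation. It remains to check goodness: around a fixed vertex $v$ with arc $a_v$, the incoming edges are the contacts made \emph{on} $a_v$ by endpoints of other arcs, and the outgoing edges are the (at most two) endpoints of $a_v$ itself. In the rotation system around $v$, the cyclic order of edges at $v$ is the order in which the corresponding contact points and the two endpoints appear along $a_v$; the two endpoints of $a_v$ are the two extreme points of this linear order along the arc, hence they are consecutive in the cyclic order around $v$ (the ``gap'' between them being the exterior of the arc). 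So the outgoing edges at each vertex are consecutive, and the orientation is good.

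I expect the main obstacle to be the bookkeeping in $(2)\Rightarrow(1)$ that the combinatorial rotation system of the abstract plane graph $G$ really does agree with the geometric cyclic order of contacts around each arc in the representation — in other words, that the CCA-representation induces the given embedding of $G$ and that the two endpoints of an arc sit at the two ``ends'' of the sequence of contact points along that arc. This is intuitively clear but is the place where one has to invoke, carefully, that a contact representation of a plane graph realizes its embedding, and that one-sidedness guarantees all the contact points on an arc lie along a single sub-arc flanked by the two endpoints rather than wrapping around. Once that is pinned down, both ``consecutive outgoing edges'' and the equivalence close up without further computation; everything else is either a direct citation of Lemma~\ref{lem:good-to-CCA} or the short convex-hull observation for $(3)\Rightarrow(2)$.
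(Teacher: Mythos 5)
Your proposal is correct and follows essentially the same route as the paper: cite Lemma~\ref{lem:good-to-CCA} for $(1)\Rightarrow(3)$, observe that interior-disjointness forces all contacts onto the convex side for $(3)\Rightarrow(2)$, and for $(2)\Rightarrow(1)$ orient each edge out of the arc whose endpoint makes the contact and note that one-sidedness leaves no incoming edges to separate the two outgoing ones in the cyclic order at a vertex. The extra care you flag about the rotation system matching the geometric order of contacts along an arc is exactly the (implicit) content of the paper's one-line argument, so nothing is missing.
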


\begin{proof}
The implication (3) $\Rightarrow$ (2) is obvious (every interior-disjoint representation is one-sided) and the (1) $\Rightarrow$ (3) is Lemma~\ref{lem:good-to-CCA}.
It remains to prove that every graph with a one-sided CCA-representation has a good $2^-$-orientation. But each vertex of the $2^-$-orientation derived from a CCA-representation has incoming edges on the two sides of the corresponding arc separated by outgoing edges at the two arc endpoints; in a one-sided representation, one set of incoming edges is empty and cannot separate the outgoing edges.
\end{proof}

\begin{proof}[Proof of Theorem~\ref{thm:CCA}, Cases~3, 4]
We first show that a plane graph $G$ with maximum degree 4 admits a CCA-representation.
 By Lemma~\ref{lem:4-reg}, $G$ admits a good $2^-$-orientation; and hence by
 Lemma~\ref{lem:good-to-CCA} $G$ admits a CCA-representation.
 This completes Case 3.

A plane $(2,0)$-tight multigraph with no CCA-representation is shown in Fig.~\ref{fig:20-counter}. It has two vertices $u, v$ joined by two parallel edges $e,e'$, and two plane $(2,0)$-tight subgraphs $G_1$ and $G_2$. $G_1$ lies in the unbounded region and $G_2$ lies in the bounded region defined by $e,e'$, and $u$ and $v$ are connected by an edge to a vertex in $G_1$ and $G_2$, respectively. $G$ is plane and $(2,0)$-tight and admits no CCA-representation since two touching circular arcs have their free ends either both in the bounded or both in the unbounded region defined by the closed created curve (Fig.~\ref{fig:20-counter}). Note that whether every \emph{planar} $(2,0)$-tight multigraph has a plane embedding that admits a CCA-representation is an open question.
 \end{proof}

As noted in the introduction, Case~3 of the proof of Theorem~\ref{thm:CCA} always constructs an interior-disjoint CCA-representation for graphs of maximum degree~4. As we now show, finding such representations without this degree constraint is hard.

\begin{figure}[t]
 \centering
 \subfigure[wire]{
  \centering\includegraphics[width=.075\textwidth]{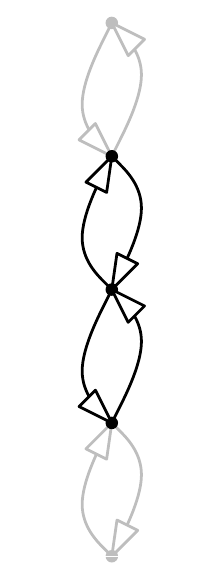}
 }
 \subfigure[splitter]{
  \includegraphics[width=0.19\textwidth]{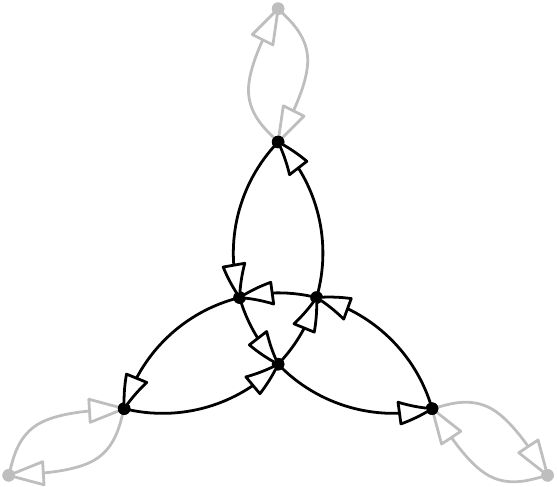}
 }
 \subfigure[clause]{
  \includegraphics[width=0.19\textwidth]{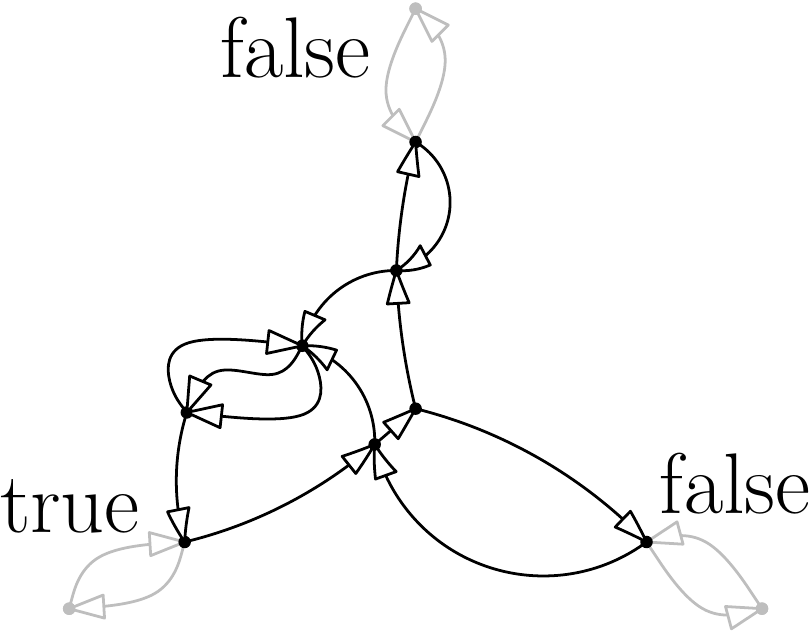}

 }
 \subfigure[]{
  \includegraphics[width=0.10\textwidth]{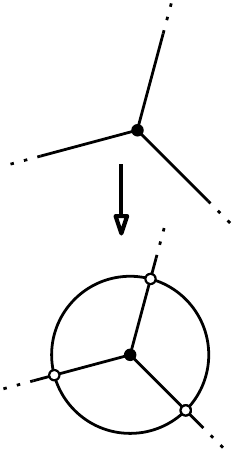}
  \label{fig:simplify-good-a}
}
 \subfigure[]{
  \includegraphics[width=0.28\textwidth]{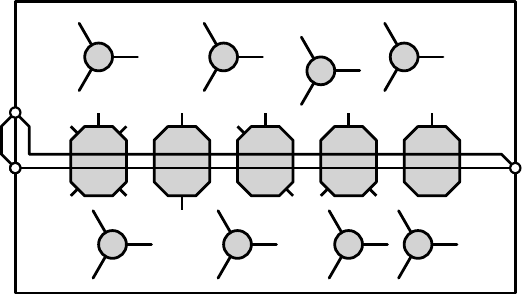}
  \label{fig:simplify-good-b}
 }
 \caption{(a)--(c)~The three gadgets used in the reduction. The gray edges encode how adjacent wire gadgets are connected. (d)~Removing multiple edges. (e)~Piercing the variable gadgets (octagons) to get the 3-connectedness.}
\label{fig:redux2multigraph}
\end{figure}

\begin{proof}[Proof of Theorem~\ref{thm:empty-hull-NPC}]
We first prove the theorem for plane multi-graphs. By Lemma~\ref{lem:good-equivalents} it is equivalent
 to prove that finding a good representation of a $(2,0)$-tight simple plane graph is NP-complete.
We reduce from the known NP-complete problem {\sf Positive Planar 1-in-3SAT}~\cite{MR08} of deciding the satisfiability of a 3SAT formula with the additional restrictions that: (1) the formula contains only positive variables; (2) we seek a truth assignment such that each
clause has exactly one true variable; and (3) the graph whose vertices are variables and clauses and whose edges are variable-clause incidences has a planar drawing in which all variables are arranged on a straight line and no edges cross this line.

Our reduction uses \emph{wire}, \emph{splitter}, and \emph{clause} gadgets (Fig.~\ref{fig:redux2multigraph}) based on the principle that, for a good $2$-orientation at a degree-$4$ vertex, each incoming edge must be opposite an outgoing edge and vice versa. Each variable of the 3SAT formula is replaced by
 a wire gadget that is closed to a circle with doubled edges. There are two good $2$-orientations of this doubled circle, the choice of which encodes the truth value of that variable. A splitter gadget with a short
 piece of wire propagate this signal to the clause gadgets, which represent 3SAT clauses.
Due to the degree-$4$ vertices, there is only one good $2$-orientation for the splitter
 that extends a ``wire signal''.
The degree-$3$ vertex in the clause gadget verifies that exactly two of the attached wires
 carry a false signal. The remaining edges and vertices in the clause gadget
ensure that a completion to a good $2$-orientation is possible in all cases.


To finish the proof, we convert our $(2,0)$-tight multigraph into a $(2,0)$-tight simple planar graph,
preserving the existence of a good $2$-orientation. To accomplish this, we surround each
 vertex by a chain of edges and vertices (Fig.~\ref{fig:simplify-good-a}) and augment the graph to make it $3$-connected (Fig.~\ref{fig:simplify-good-b}),
ensuring that our chosen combinatorial embedding is unique up to reflection. To do this we add a triangle with doubled edges that will enclose the previous construction.
We route two non-parallel edges of the triangle so they cut through edges of all variable gadgets,
and repair the crossings by adding new vertices at the intersections.
We claim that  every vertex has three vertex-disjoint paths to vertices on the boundary,
 which implies the $3$-connectedness. If we follow a wire in one direction we can do this
 in two possible vertex-disjoint ways (after the removal of the multiple edges). Hence, if we consider
 a vertex of a wire gadget we can reach the variable gadgets with three vertex-disjoint paths;
 two in one direction and one taking a detour via a clause gadget and another wire.
From the variable gadgets we can now direct the three paths to the boundary. Again, since the
 variables are built from wires, we can select two paths along one ``direction'' and use the other
 direction for the third path.
The argument is similar for the other vertices. Note that the augmented graph still has one
 multiple edge, which we can remove with the same construction. The resulting graph has maximum vertex degree~$5$,  in its clause gadgets.
\end{proof}

\section{Contact Representations with Wedges}
\label{sec:wedges}
\label{sec:wedge}

A \df{wedge} is a polyline segment with at most one bend (and hence a sequence of two circular arcs). Here we show that any plane $(2,0)$-sparse
 graph admits a contact representation with wedges.
 We also consider representation with constrained wedges.

\begin{figure}[htbp]
 \centering
 \includegraphics[width=0.22\textwidth]{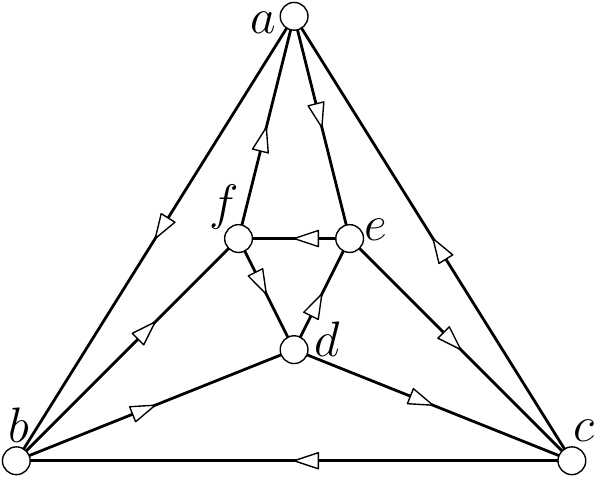}
~~~~~~~~~~~~~~
 \includegraphics[width=0.22\textwidth]{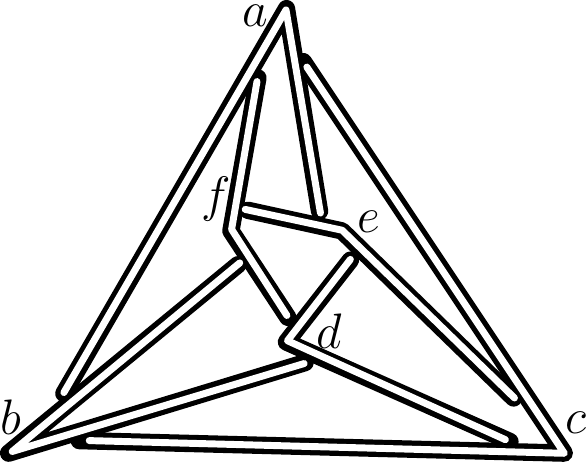}
 \caption{(a)~A straight-line drawing of a $(2,0)$-sparse graph $G$ and a $2$-orientation of $G$,
 (b)~a~contact representation of $G$ with wedges.}
 \label{fig:wedge}
\end{figure}

\begin{backInTime}{thm-wedge}
\begin{theorem}
Every plane $(2,0)$-sparse graph has a contact representation where each vertex is represented by a wedge.
\end{theorem}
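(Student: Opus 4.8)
The plan is to start from the augmentation result and reduce to the tight case. By Lemma~\ref{lem:augmentation}, every plane $(2,0)$-sparse graph is a spanning subgraph of a plane $(2,0)$-tight graph; since deleting an arc's contact (i.e.\ shortening the arc it touches, or just ignoring an edge) only makes a representation easier, it suffices to produce a wedge contact representation for plane $(2,0)$-tight multigraphs. By Lemma~\ref{lem:orientations}(\ref{enum:orient-20}), such a graph admits a $2$-orientation: every vertex has exactly two outgoing edges. The idea is to use this orientation as a blueprint, representing each vertex $v$ by a wedge whose \emph{two free endpoints} realize the two outgoing edges of $v$, while the incoming edges of $v$ are realized by the endpoints of other wedges touching the interior of $v$'s wedge.

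The key step is a geometric placement compatible with the planar embedding. First I would take a straight-line planar drawing of $G$ (Fáry), or better a straight-line drawing in which the $2$-orientation is respected in the sense that the two outgoing edges of each vertex leave it ``spread apart'' enough. Then, for each vertex $v$, place the bend of $v$'s wedge at (or near) the drawing-point of $v$, and let the two legs of the wedge run a short distance along the two outgoing edges of $v$, terminating at interior points of the two target wedges. Because each vertex emits exactly two outgoing legs and the wedge has exactly two endpoints, the bookkeeping matches: each edge $(u,w)$ oriented $u\to w$ is realized by one leg of $u$'s wedge hitting the body of $w$'s wedge. The incoming edges at $w$, appearing consecutively or interleaved around $w$ in the embedding, land on the two legs of $w$'s wedge on the appropriate sides; the bend gives the flexibility a single segment lacks (this is exactly why one segment does not suffice but two do, resolving the de~Fraysseix--Ossona de Mendez question). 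One scales the legs of the wedges to be short enough that distinct wedges meet only in the intended contacts and never cross, using the fact that the underlying drawing is planar.

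The main obstacle I expect is the local consistency at each vertex: an incoming edge of $w$ must arrive at a point that is genuinely in the \emph{interior} of $w$'s wedge (not at $w$'s bend, which is reserved, and not past $w$'s free endpoints), and the cyclic order of incoming edges around $w$ in the embedding must be realizable by points along the two-leg polyline in the plane. One has to check that every incoming edge, after routing along its own outgoing direction from its tail, actually reaches the body of $w$: the cleanest way is to first route each outgoing leg all the way to $w$'s drawing-point and then retract it slightly, and to choose the wedge's opening angle and leg lengths generically so that the finitely many contact points are all distinct and interior. I would handle the degree-$2$ vertices and parallel edges (which occur since we allow multigraphs) by small perturbations. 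Finally, since we only need the $(2,0)$-sparse — not tight — graph represented, we simply delete the extra arcs/contacts introduced by the augmentation, which cannot destroy planarity or create new contacts. I would illustrate the construction with the figure already referenced (Fig.~\ref{fig:wedge}).
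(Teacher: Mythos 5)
Your proposal is correct and follows essentially the same route as the paper: take a $2$-orientation (the paper uses a $2^-$-orientation of the sparse graph directly, via Lemma~\ref{lem:orientations}(\ref{enum:orient-2k}), so your augmentation detour is unnecessary), draw $G$ with straight lines, let each vertex's wedge be the union of its outgoing edge segments, and perturb to move the incoming contacts off the degenerate bend point. The paper's proof is exactly this three-line argument with Fig.~\ref{fig:wedge}, so no further comparison is needed.
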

\end{backInTime}


\begin{proof} A plane $(2,0)$-sparse graph $G$ has a $2^-$-orientation~\cite{FM01,BF12,NW61}.
 Consider a straight-line drawing of $G$. For each vertex
 $v$, the wedge for $v$ is the union of the straight-line segments representing the outgoing edges from $v$.
Here all the contacts representing the incoming edges for a vertex is at the bend-point of the wedge,
 but a small perturbation of the representation is sufficient to get rid of this degeneracy; see
 Fig.~\ref{fig:wedge}.
\end{proof}

We now consider contact representation with \df{right wedges}, where the angle at the corner of
 each wedge is $90^\circ$.

\begin{figure}[htbp]
 \centering
 \includegraphics[width=0.25\textwidth]{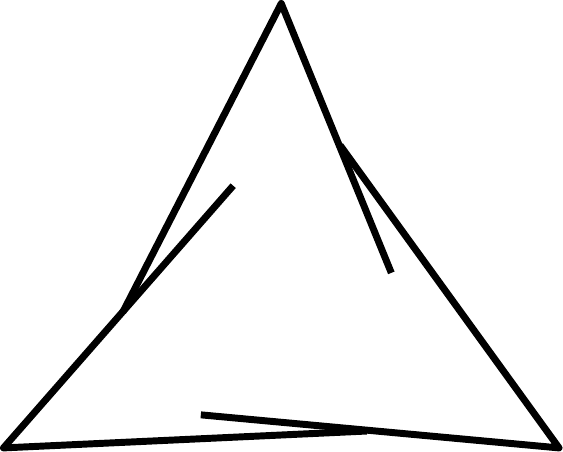}
 \caption{
A wedge-contact representation bounded by three wedges.}
 \label{fig:wedge-app}
\end{figure}

\begin{lemma}\label{lem:no-90}
 Let $\Gamma$ be a wedge-contact representation of a $(2,0)$-tight graph such that the outer boundary of $\Gamma$ is bounded by exactly three wedges. Then $\Gamma$ cannot be modified to a topologically equivalent representation where all the three wedges on the outer boundary are right wedges.
\end{lemma}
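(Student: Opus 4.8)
The plan is to derive a contradiction from a counting argument based on angle sums, using the fact that a $(2,0)$-tight graph on $n$ vertices has exactly $2n$ edges and hence (by Euler's formula for a plane graph) a fixed number of faces. First I would set up notation: suppose for contradiction that $\Gamma$ can be modified to a topologically equivalent representation $\Gamma'$ in which the three outer wedges $w_1,w_2,w_3$ are all right wedges, i.e.\ have corner angle exactly $90^\circ$. In a wedge-contact representation each vertex $v$ contributes one wedge with a single bend; the two arms of the wedge carry the two outgoing edges of $v$ in the associated $2$-orientation, and every incoming edge of $v$ arrives at an interior point of one of the two arms. The key geometric observation is that each such contact "consumes" an angle: at an interior contact point of an arm, the touching wedge's endpoint sits on one side, so locally the arm of $v$ together with the incident touching arm form a configuration whose angles around that contact point sum to $360^\circ$ but leave only a constrained amount of angular room on the two sides of $v$'s arm.

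The main step is to carefully account for the total angle. I would look at the planar subdivision induced by $\Gamma'$ — the wedges together with the outer boundary cut the plane (or the bounded outer region) into faces, and around each vertex-wedge the two arms and the bend determine angles. Summing the interior angles of all bounded faces, one gets a quantity determined purely by the number of faces (via $\sum_f (\text{angle sum of } f) = (\text{something})\cdot 2\pi$ after subtracting the full turns at interior branch points). On the other hand, each wedge $w_i$ contributes its corner angle plus the two "flat" angles ($180^\circ$) along its arms wherever nothing forces a bend; the outer three wedges being right wedges pins down their contributions to be $90^\circ$ each at the corner, whereas the generic (non-outer) wedges are flexible. I expect the contradiction to come out as: the fixed $90^\circ$ corners of $w_1,w_2,w_3$ force the total available angle to be strictly less (or more) than what the face-count identity demands, because the three outer wedges bound the unbounded face and their corner angles appear with a sign opposite to interior corners in the Gauss–Bonnet-type sum. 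Concretely, the outer face "sees" the reflex sides ($270^\circ$) of the three right corners, and $3\cdot 270^\circ = 810^\circ$ cannot be reconciled with the combinatorial constraint that the three wedges enclose a region whose boundary turning number is $+1$ (total exterior turn $360^\circ$), since a right corner turns the boundary by only $90^\circ$ and three of them give $270^\circ < 360^\circ$, leaving a $90^\circ$ deficit that would have to be absorbed by bends of the three outer arms — but in a wedge each polyline has \emph{only one} bend, which is the corner itself, so the arms are straight and cannot absorb any turning.

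So the crisp version of the argument is the turning-number obstruction: the outer boundary of $\Gamma$ is a closed curve made up of the three outer wedges; traversing it, the total turning must be $360^\circ$; each wedge is a polyline with exactly one bend, so along the outer boundary the only places turning can occur are at the three corners and at the (at most three) points where consecutive outer wedges meet. If all three corners are right angles contributing $90^\circ$ of turn each, the three junction points must supply the remaining $90^\circ$ total — but at a junction one wedge's \emph{endpoint} touches the interior of another wedge's arm, and I would argue (this is the step needing care) that such a contact contributes $0^\circ$ of turning to the outer boundary, or at most contributes turning of the wrong sign, because the boundary passes \emph{through} or \emph{around} the touching endpoint along a locally straight arm. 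The hard part will be making this junction-angle bookkeeping rigorous — precisely classifying, for each of the three boundary junctions, how the outer closed curve behaves locally (does it cut the corner of the incoming wedge, or wrap its endpoint?) and showing that in every admissible case the turning contributed there is nonpositive, so the total turning is at most $270^\circ < 360^\circ$, the desired contradiction. Once that local analysis is pinned down, the rest is immediate: no such $\Gamma'$ exists, so $\Gamma$ cannot be modified as claimed.
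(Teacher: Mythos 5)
Your ``crisp version'' is essentially the paper's own argument: the paper observes that, by $(2,0)$-tightness, every vertex of the outer boundary other than the three wedge corners is reflex, so the boundary is a pseudo-triangle whose three convex angles must sum to less than $180^\circ$, contradicting $3\cdot 90^\circ=270^\circ$ --- which is exactly your exterior-angle/turning-number count ($360^\circ$ of total turning versus $270^\circ$ from the corners plus nonpositive contributions elsewhere). The junction bookkeeping you flag as the hard part is precisely the ``all other boundary angles are reflex'' step, which the paper settles in one line from the fact that in a $(2,0)$-tight graph every arm endpoint is itself a contact.
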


\begin{proof}
 Since the underlying graph is a $(2,0)$-tight graph, all the angles in the outer boundary of $\Gamma$ other than those at the three corners of the wedges are reflex angles. Thus the outer boundary of $\Gamma$ is a pseudo-triangle; see Fig.~\ref{fig:wedge-app}. If all three wedges are right angle, we have a pseudo-triangle with the summation of the three convex angles more than $180^\circ$; which is geometrically infeasible.
\end{proof}

Lemma~\ref{lem:no-90} immediately implies that if a plane $(2,0)$-tight graph has a triangular outer face, then it has no contact representation with right wedges. In particular, the octahedron graph is a $(2,0)$-tight graph and all its faces are triangles. Thus in every planar embedding of the octahedron, the outer face is a triangle, so the octahedron has no contact representation with right wedges.

\section{Conclusion and Open Questions}

We presented new results about contact representations of graphs with circular arcs. Although every graph with such a contact representation is planar and
 $(2,0)$-sparse, we provided a $(2,0)$-tight plane multigraph that does not admit such a
 representation. On the other hand, we identified
 several subclasses of plane $(2,0)$-sparse graphs that have CCA representations.
 The natural question remains open: does every simple planar $(2, 0)$-sparse graph have a
 circular-arc contact representation, if we allow changing the embedding?

A circular-arc contact representation $\Gamma$ for a $(2,0)$-tight graph $G$ defines a $3$-regular
 \emph{skeleton graph}~\cite{AnS13}, where the points of contact are vertices and arcs between contacts are edges (Fig.~\ref{fig:teaser}). Each vertex of $G$ with degree $d\ge 3$ corresponds to a path of $d-2$ vertices
 in $\Gamma$ along one circular arc. Thus one possible way to find a circular-arc contact representation
 for a plane $(2,0)$-tight graph $G$ is to find a 3-regular graph by splitting each vertex of degree
 $d>3$ into $d-2$ degree-3 vertices (each choice of splitting corresponds to a different
 $2^-$-orientation), and then align the path associated with each vertex of $G$ into a circular arc.
 Given a 3-regular graph with a path-cover, finding a representation with each path aligned as a circular arc is related to the stretchability question~\cite{FM07barycentric},
 which is still open.

We also showed that every plane $(2,0)$-sparse graph has a contact
representation with polyline segments with a single bend (wedges).
In this context, several
questions seem interesting: does every $(2,0)$-sparse graph admit a contact representation with
equilateral wedges (that is, wedges with equal-length segments)? Can we bound the smallest
size of the angle at the corner of the wedges (to say, $45^\circ$)?

{

\end{document}